\newtheorem{theorem}{Theorem}
\newtheorem{lemma}{Lemma}
\def\A{{\bf A}}
\def\a{{\bf a}}
\def\C{{\bf C}}
\def\G{{\bf G}}
\def\I{{\bf I}}
\def\Q{{\bf Q}}
\def\R{{\bf R}}
\def\S{{\bf S}}
\def\U{{\bf U}}
\def\u{{\bf u}}
\def\V{{\bf V}}
\def\v{{\bf v}}
\def\W{{\bf W}}
\def\X{{\bf X}}
\def\x{{\bf x}}
\def\Y{{\bf Y}}
\def\0{{\bf 0}}
\def\1{{\bf 1}}
\def\IM{{\mathcal I}}
\def\NM{{\mathcal N}}
\def\OM{{\mathcal O}}
\def\CB{{\mathbb C}}
\def\Si{\mbox{\boldmath$\Sigma$\unboldmath}}
\def\Pii{\mbox{\boldmath$\Pi$\unboldmath}}
\def\argmin{\mathop{\rm argmin}}
\def\orth{\mathsf{orth}}
\def\rk{\mathrm{rank}}
\begin{document}

\title{Ultra-Fast Accurate AoA Estimation via Automotive Massive-MIMO Radar}

\author{Bin~Li$^{1,2}$,
        Shusen~Wang$^{3}$,
        Jun~Zhang$^{2}$,
        Xianbin~Cao$^{4}$,
        Chenglin~Zhao$^{2}$

\thanks{\emph{1} School of Information and Electronics, Beijing Institute of Technology, Beijing, 100081, China.}
\thanks{\emph{2} School of Information and Communication Engineering,
Beijing University of Posts and Telecommunications, Beijing, 100876, China.}
\thanks{\emph{3} Department of Computer Science, Stevens Institute of Technology, Hoboken, NJ 07030, USA.}
%\thanks{\emph{3} School of Information and Electronics, Beijing Institute of Technology, Beijing, 100081, China.}
\thanks{\emph{4} School of Electronic and Information Engineering, Beihang University, Beijing, 100191, China.}
}

\maketitle

\begin{abstract}
Massive multiple-input multiple-output (MIMO) radar, enabled by millimeter-wave virtual MIMO techniques, provides great promises to the high-resolution automotive sensing and target detection in unmanned ground/aerial vehicles (UGA/UAV).
As a long-established problem, however, existing subspace methods suffer from either high complexity or low accuracy.
In this work, we propose two efficient methods, to accomplish fast subspace computation and accurate angle of arrival (AoA) acquisition.
By leveraging randomized low-rank approximation, our fast multiple signal classification (MUSIC) methods, relying on random sampling and projection techniques, substantially accelerate the subspace estimation by orders of magnitude.
Moreover, we establish the theoretical bounds of our proposed methods, which ensure the accuracy of the approximated pseudo-spectrum.
As demonstrated, the pseudo-spectrum acquired by our fast-MUSIC would be highly precise; and the estimated AoA is almost as accurate as standard MUSIC.
In contrast, our new methods are tremendously faster than standard MUSIC.
Thus, our fast-MUSIC enables the high-resolution real-time environmental sensing with massive MIMO radars, which has great potential in the emerging unmanned systems.
\end{abstract}

\begin{IEEEkeywords}
Millimeter-wave radar, massive MIMO, automotive sensing, AoA estimation, subspace method, real-time.
\end{IEEEkeywords}

\IEEEpeerreviewmaketitle

\section{Introduction}

Unmanned aircrafts and automotive vehicles are receiving more and more attention from industry and academia \cite{jones2001keeping,Guizzo2011Google}, owing to its great potential in the widespread applications~\cite{Blasch2006Unmanned}.
The success of unmanned systems critically depends on the fusion of Global Positioning System (GPS), automotive radar, and other environment sensors (e.g. lidar, ultrasound, and camera) \cite{Alonzo2004Toward,murad2013requirements}.
In comparison to others techniques, millimeter-wave (mm-wave) automotive radar is extremely attractive to future unmanned systems, due to its two inherent merits.
First, it is immune to adverse environment such as dust, fog, and smoke;
and second, it is also robust to dazzling and no light conditions \cite{waldschmidt2014future}.
Furthermore, recent advancement in mm-wave semiconductor circuit (e.g. 24 and 77GHz) \cite{hasch2012millimeter} makes it possible to deploy the large-scale arrays economically onto unmanned systems.
As such, assisted by virtual Multiple-Input Multiple-Output (MIMO) or co-located MIMO techniques \cite{li2007mimo}, hundreds of equivalent receiving channels are made available to achieve the high-resolution environment sensing in dynamical automotive scenarios \cite{bilik2018automotive,2019The}.

In practice, unmanned vehicles use three types of mm-wave radars \cite{hasch2012millimeter}:
(1) long-range radar (LRR) for automotive cruise control, in the range $\sim$200 m and with field of views (FOVs) of azimuth $\pm 9^\circ$; (2) medium-range radar (MRR) for cross-traffic alert ($\sim$70 m, FOVs $\pm 15^\circ$); and (3) short-range radar (SRR) for park assist ($\sim$10 m, FOVs $\pm 75^\circ$);
as illustrated in Figure \ref{fig:radar}(a).
To meet diverse requirements, various modulation waveforms are also designed for automotive radars\cite{patole2017automotive,2019High}.
Popular solutions include frequency-modulated continuous-waveform (FMCW) \cite{stove1992linear,Lin2011A,Babur2013Nearly} and pulsed continuous wave (CW) radar, etc.
Among these, FMCW sweeps a broad Radio Frequency (RF) bandwidth (GHz) while maintaining a small Intermediate Frequency (IF) bandwidth (MHz), which permits the high-resolution sensing via low-cost circuit \cite{bilik2018automotive}, thus providing great potential to the practical deployment.

Despite the great advances in hardware integration and system design in automotive massive-MIMO radars~\cite{patole2017automotive,Kok2005Signal,2020MIMO}, high-resolution and low-complexity signal processing remains one substantial challenge.
For massive-MIMO radars, the high-resolution environment sensing requires a large number of channels \cite{bilik2018automotive}, which incurs a high computation cost.
Meanwhile, computational data processing causes an intolerable latency, which may lead to disastrous consequences.
Thus, the deployment of massive MIMO automotive radar calls for high-resolution yet real-time processing algorithms.

Theoretically, the maximum-likelihood (ML) method achieves the optimal accuracy \cite{Wenger2005Automotive}, with an exhaustive search in two-dimensional large space (i.e. range and AoA).
Provided a large number of samples, subspace methods, represented by Multiple Signal Classification (MUSIC) \cite{schmidt1986multiple} and Estimation of Signal Parameters via Rational Invariance Techniques (ESPRIT) \cite{roy1989esprit}, are able to attain the near-optimal accuracy \cite{ krim1996two}, relying on signal or noise subspaces extracted by singular value decomposition (SVD) on a covariance matrix.
Despite the high-resolution sensing, the high complexity, i.e. $\mathcal{O}(M^3)$ for 1-D estimation and $\mathcal{O}(N^3M^3)$ for 2-D estimation ($M$ is the number of antennas and $N$ is the number of snapshots), as well as the intolerable latency seriously limit the practical use~\cite{oh2015low}, especially in massive-MIMO radars ($M>200$)\footnote{For example, when the number of elements is 1000 as in emerging MIMO automotive radars, the processing latency would be around 500 ms even on high-performance CPUs.}.
In practice, the dimension of interests of MUSIC can be reduced by pre-estimation~\cite{zoltowski1993beamspace}, which is still computational (e.g. an extended matrix in 2-D MUSIC is very large).

To simplify such subspace methods, a block-Lanczos scheme was applied \cite{simon1984the}, which focuses only on the first $K$ singular vectors and thus reduces the complexity to $\OM (p_MKM^2)$, whereby $p_M \sim \mathcal{O}(\log M)$ is the number of blocks in the Krylov subspace and $K$ is the number of targets.
The so-called beamspace method, i.e. transforming element space to beam-space via a beamforming weight vector \cite{Mathews1994Eigenstructure,2018Lin}, can be used to simplify the subspace computation.
Although it attains a promising performance in uniform circular array (UCA) \cite{2018Lin}, it may seriously degrade the AoA accuracy in the case of uniform linear array (ULA), as for automotive MIMO radars.
Compressive sensing (CS) can be also applied to AoA estimation, which may be computational for massive MIMO radar \cite{2013Spatial,Stoeckle2015DoA,2020MIMO,2015Distributed}.
In ref. \cite{Benesty2005A}, the complex SVD was replaced with matrix inverse~\cite{oh2015low}.
Although the latter may be realized more efficiently, it has the same complexity $\OM (M^3)$.
When the signal-to-noise-ratio (SNR) is low, the estimated noise subspace would be even inaccurate, making the approximated pseudo-spectrum largely deviated.
In ref. \cite{marcos1995the}, the Propagator scheme was designed.
By introducing a Propagator operator, it simplifies the complexity \cite{Marcos1994Performances}, yet scarifying the spatial resolution and the estimation reliability (e.g. in the range of $[70,~90]$ degree)~\cite{Tayem2005L}.
As one popular solution, unknown targets can be quickly estimated via FFT~\cite{Wenger2005Automotive,engels2017advances,Garcia2018TI}.
Due to its low complexity, $\mathcal{O}(M\text{log}_2M)$, FFT has been widely used in automotive radars \cite{Ramasubramanian2017TI,Garcia2018TI,2019The}, which, however, is insufficient for the high-resolution sensing and high-quality point cloud acquisition~\cite{patole2017automotive}.

In this work, we break this major bottleneck in massive MIMO automotive radar.
That is to say, we aim to accomplish real-time automotive sensing at a scalable computational complexity, but achieve the high-resolution target estimation as the standard MUSIC.
To achieve this, we apply randomized low-rank approximation (RLAR) to a covariance matrix, and develop two fast-MUSIC methods to approximately compute the signal subspace of it -- the general stumbling block in all subspace-based methods.
As such, we are able to acquire the highly accurate AoA estimation at a linear complexity -- $\mathcal{O}(K^2 M)$.
Our fast-MUSIC would boost the widespread use of massive MIMO radars in the emerging unmanned systems.

In summary, we offers the following contributions.
\begin{itemize}
	\item
    We introduce a novel concept, i.e. \emph{approximated} rather than traditional \emph{exact} matrix computation, to subspace methods.
    We first approximate a large covariance $\S$ via three small matrices, in one special form $\S\simeq \C\W\C^H$.
    Here, the matrix sketch $\C \in \CB^{M \times p}$ ($K <  p \ll M$) is abstracted by uniform column-sampling on $\S$; while $\W\in \CB^{p \times p}$ is a weight matrix minimizing the approximation residue.
    Relying on such a randomized low-rank approximation, our fast-MUSIC reduces the time complexity of subspace computation to $\OM (p^2 M )$.
    \item
	We show our fast-MUSIC method enables the highly accurate pseudo-spectrum approximation and the precise AoA estimation.
    As the theoretical analysis suggests, when the signal subspace is approximately computed as in our fast-MUSIC, the estimated pseudo-spectrum $\sqrt{\tilde{P}_{\textrm{music}}(\theta)}$ has a relative error
	\[
	\OM \Big( \tfrac{ \sigma_{K+1} (\S) }{ \sigma_K (\S) } \, \sqrt{\tfrac{M^2}{p}} \Big),
	\]
	compared to the standard MUSIC pseudo-spectrum $\sqrt{P_{\textrm{music}}(\theta)}$, when the user-specify parameter meets $p \thicksim \OM(K \log K)$.
	Here, $\sigma_k (\S)$ is the $k$-th largest singular value of $\S$.
	If the signal-to-noise ratio (SNR) is relatively high, the spectral gap $\tfrac{ \sigma_{K+1} (\S) }{ \sigma_K (\S) }$ is very small, making our estimation highly accurate.
	\item
	To further improve the approximation accuracy, we resort to another random projection technique, by iteratively finding a good orthogonal projection matrix $\V\in \CB^{M \times p}$.
    In contrast to a direct uniform sampling, we obtain an improved sketch $\C=\S\V\in \CB^{M \times p}$ and a low-rank approximation $\S\simeq \C\W\C^H$.
    By incorporating more information into $\C$, the accuracy of the approximated pseudo-spectrum is further enhanced.
    \item	
    We establish the theoretical bound of our refined fast-MUSIC method.
    As shown, after $t$ iterations, the approximated pseudo-spectrum $\sqrt{\tilde{P}_{\textrm{music}}(\theta)}$ has a relative error
	\[
	\OM \left( \big(  \tfrac{ \sigma_{K+1} (\S) }{ \sigma_K (\S) } \big)^{t+1} \, \sqrt{M^2 K} \right),
	\]
	compared to the standard MUSIC pseudo-spectrum $\sqrt{P_{\textrm{music}}(\theta)}$.
	In this case, even if the SNR is \emph{not} very high, several power iterations (e.g. $t=2$) suffices for the near-optimal estimation.
    \item
    We evaluate our fast-MUSIC methods with comprehensive simulations, and compare them with its subspace counterparts (e.g. MUSIC, Lanczos, Propagator, etc).
    Our numerical studies validate the efficiency and the accuracy of fast-MUSIC, which provides the great potential to high-resolution and real-time massive-MIMO radars in the emerging automotive applications.
\end{itemize}

The remaining of this work is structured as follows.
In Section II, we briefly introduce the system model and the subspace methods for automotive sensing.
In Section III, we present two fast-MUSIC methods inspired by randomized matrix sketching.
In Section IV, we derive some theoretical bounds of interests.
In Section V, the numerical simulations are provided.
Finally, we conclude this work in Section VI.

The related notations are summarized as follows:
$\A$ denotes an matrix with a rank $r = \rk (\A)$;
its SVD is $ \A \: = \: \U \Si \V^H \: = \: \sum_{i=1}^r \sigma_i (\A) \, \u_i \v_i^H ,$
where $\sigma_i (\A)$ ($> 0$) is the $i$-th singular value;
$\u_i$ and $\v_i$ are the $i$-th left and right singular vector.
%Suppose the singular values $\sigma_1 (\A), \cdots , \sigma_r (\A)$ are in the descending order.
The Moore-Penrose pseudo-inverse of $\A$ is $\A^\dagger$.
%The spectral norm of $\A$ is $\| \A \|_2 \:  = \: \sigma_1 (\A).$
The Frobenius norm of $\A$ is $\| \A \|_F \:  = \: \sqrt{\text{tr}(\A^H \A)}.$
The best rank-$k$ ($k < r$) approximation to $\A$ is $ \A_k \: \triangleq \: \sum_{i=1}^k \sigma_i (\A) \, \u_i \v_i^H . $
%If $\A$ is a positive semi-definite matrix, then $\U = \V$ which means the SVD is also the eigenvalue decomposition of $\A$.
%For the $M\times K$ matrix with orthonormal columns $\U_K$, its row coherence of $\U_K$ is defined as $\mu (\U_K) \: \triangleq \: \tfrac{M}{K} \max_{i} \big\| (\U_K)_{i:} \big\|_2^2 \: \in \: \big[1, \tfrac{M}{K} \big],$  where $(\U_K)_{i:}$ is the $i$-th row of $\U_K$.
%This coherence notation has been used in compressed sensing \cite{candes2006near},
%matrix completion \cite{candes2009exact}, and randomized linear algebra \cite{gittens2016revisiting,wang2016spsd,wang2018sketched}.

\section{Subspace Methods in MIMO Radar}

In this section, we briefly introduce FMCW MIMO radar of autonomous systems.
On this basis, some popular subspace methods for high-resolution estimation are analysed.

\subsection{System Model -- FMCW Radar}
A massive-MIMO radar system (e.g. virtual MIMO) consists of $M$ receiving antennas, i.e. typically the ULA.
For each element, the emitted FMCW waveform within a symbol duration $T_{\text{sym}}$ reads \cite{stove1992linear,Garcia2018TI}:
\begin{align} \label{eq:def_st}
s(t)= \text{exp} \left[ j \big(w_s t + \frac{\mu}{2}t^2 \big) \right], ~~~0\leq t < T_{\text{sym}},
\end{align}
where $w_s$ is the initial frequency and $\mu$ is the changing rate of instantaneous frequency of the chirp signal.
Given the bandwidth of FMCW signals $w_B$, as well as the emission duration, $T_{\text{sym}}$, the changing rate is $\mu=w_B/T_{\text{sym}}$.

After the signal calibration (i.e. mitigating non-ideal antenna effects based on the measured response), the received signal at the $m$-th antenna reads:
\begin{align*}
y_m(t) =  \sum_{k=0}^{K-1}\alpha_{k,m} s (t-\tau_k)\times \text{exp} \left( j \frac{2\pi}{\lambda_s} m d  \text{sin}\theta_k \right) + n_m(t).
\end{align*}
Here, $r_k(t)\triangleq\alpha_{k,m} s (t-\tau_k)$ is the signal reflected from the $k$-th target;
$\tau_k$ and $\theta_k$ are times of arrival and angles of arrival;
$\alpha_{k,m}$ is the complex channel gain between the $k$-th target and the $m$-th element, and in the far-filed cases\footnote{When the near-field effect occur, the subspace method would be still applicable, e.g. with the proper modification \cite{2012Efficient}. In such cases, our proposed methods would be applied directly to reduce the time complexity of subspace computation.} we have $\alpha_{k,m}=\alpha_k$;
$\lambda_s$ denotes the wavelength of carrier;
$d$ is the spacing distance of two adjacent elements (we assume $d=\lambda_s/2$);
$n_m(t) =  \frac{1}{\sqrt{2}}[\mathcal{N}(0,\sigma_{n}^2) + j \mathcal{N}(0,\sigma_{n}^2)]$ denotes the complex additive Gaussian noise with a variance $\sigma_{n}^2$.

A significant advantage of FMCW MIMO radar is that it is capable of demodulating signals via one simple mixer \cite{stove1992linear}, which greatly facilitates the low-cost RF implementation.
To be specific, at the reception-end the mixer-based de-chirping process is used to extract the beat signal $\tilde{y}_m(t)$, i.e.
\begin{align*}
\check{y}_m(t)= s(t)\times \left[\sum_{k=0}^{K-1}r_k (t) \text{exp} \left( j\frac{2\pi}{\lambda_s}m d  \text{sin}\theta_k \right) + n_m(t)\right].
\end{align*}
By substituting $s(t)$ into eq. \eqref{eq:def_st}, and after a low-pass filter with an impulse response $h(t)$, the target signal can be obtained, which is a composition of target-modulated signals:
\begin{align}
\tilde{y}_m(t)=  \sum_{k=0}^{K-1}\alpha_k   e^{ j\left( \mu \tau_kt + \omega_s\tau_k-\frac{\mu}{2}\tau_m^2 \right)} \cdot e^{ j\left( \pi k \text{sin}\theta_k \right)} + \tilde{n}_m(t). \nonumber
\end{align}
Here, $\tilde{n}_m(t) = h(t) \otimes [n_m(t)s(t)] $ is the residual noise after the de-chirping procedure; $\otimes$ denotes the convolution process.
Then, the analog to digital convertor (ADC) with a sampling frequency $f_s=1/T_s$ outputs the discrete signal:
\begin{equation*}
\tilde{y}_m (n) \: = \: \tilde{y}_m(nT_s) ,
\end{equation*}
for $m=0$ to $M-1$ and $n=0$ to $N-1$, with $N\triangleq T_{\text{sym}}/T_s$.
On this basis, the estimation of AoA, range and velocity of targets can be estimated from the discrete signal $\tilde{y}_m (n)$.
%phase shift $\vartheta_k$, delay-induced phase shift $\kappa_k$, and delay-induced phase $\rho_k$. Finally, unknown AoA, ToA and range information can be all recovered, i.e.
%\begin{align}
%\vartheta_k &= \exp \left[ j  \, \tfrac{2\pi}{\lambda_s} \, d \, \text{sin} (\theta_k ) \right], \\
%\kappa_k &= \exp (j \, \mu  \, \tau_k \, T_{sym}), \\
%\rho_k &= \exp \left[ j \big(- \tfrac{1}{2} \mu \, \tau_k^2 \, + \, w_s \,\tau_k \big) \right].
%\end{align}
\begin{figure}[!t]
\centering
\includegraphics[width=8cm]{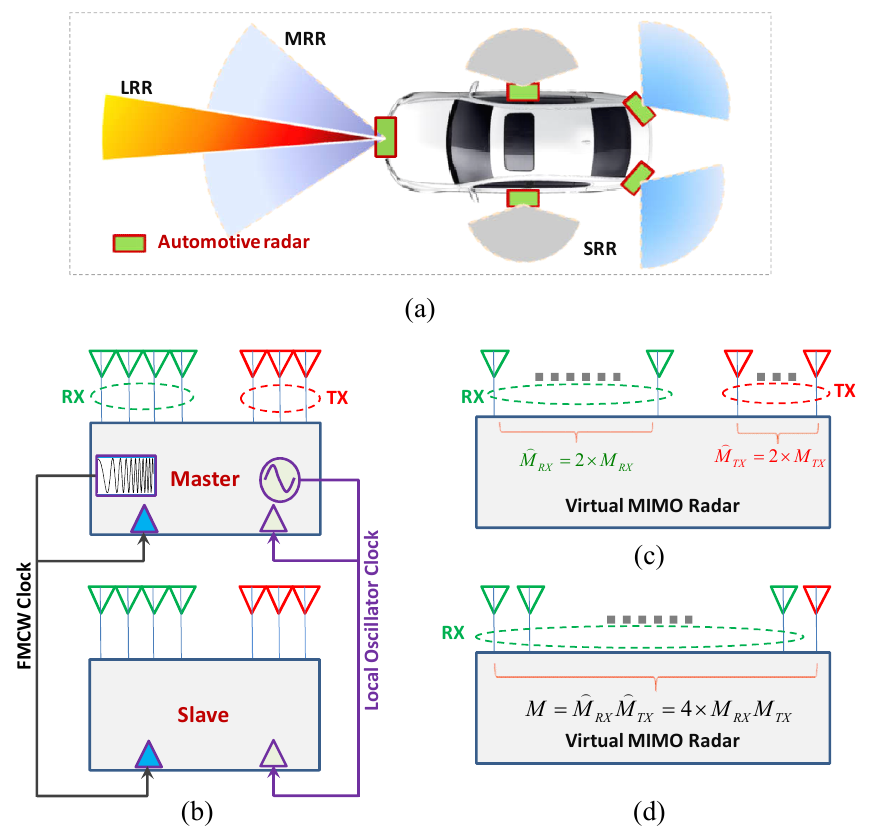}
\caption{\textbf{Massive MIMO automotive radar with cascaded sub-arrays}. (a) Typical radars for autonomous driving systems, including LRR, MRR and SRR. (b) A schematic framework on cascaded MIMO radar, whereby the master generates the local oscillator signal and controls the slave via two synchronized clocks, i.e. FMCW clock and LO clock. Here, each co-located MIMO radar is equipped with $M_{TX}=3$ transmitting antennas and $M_{RX}=4$ receiving antennas. (c) The equivalent MIMO radar system of two cascaded sub-systems. (d) The cascaded-based massive MIMO radar system with the equivalent channels of $M=4\times M_{RX}M_{TX}$.}
\label{fig:radar}
\end{figure}

%\red{Double check Eqn (3):
%   $\theta_k = \exp \left[ j  \, \tfrac{2\pi}{\lambda_s} \, d \, \text{sin} (\theta_k ) \right]$.
%   Make sure the former and latter $\theta_k$ are the same.}
Combined with the massive MIMO techniques, the resolution of automotive radars can be effectively improved.
In practice, this can be implemented via virtual MIMO techniques \cite{li2007mimo,bilik2018automotive,2019The,2019High}.
As in Figure 1-b, each subsystem involves 3 TX and 4 RX elements -- the receiving channels is then 12.
When two sub-systems are cascaded ($N_c=2$), the equivalent receiving channels would be $(3\times 2)\times(4\times2)=48$, see Figure 1-c.
In the general case, for the $N_c$ cascading system, the equivalent receiving channels would be $12 N_c^2$, as illustrated in Figure 1-d.
Thus, hundreds of elements can be integrated to enhance the spatial resolution \cite{bilik2018automotive,Li2020Fast}.
Moreover, this leads to one compact MIMO radar (e.g. $\sim$10 cm array), thus alleviating the near-field effects to some extent.

%In the following, we examine the subspace-based high-resolution AoAs estimation of unknown targets.
%This information is of great importance to unmanned systems, which is yet a major challenge to real-time automotive sensing.

\subsection{Subspace Detection \& Estimation} \label{sec:prior_works}

For automotive sensing, unknown AoA (i.e. $\theta_k$) can be estimated by various methods \cite{patole2017automotive,2019High,Kok2005Signal}.
When the high resolution estimation was emphasised as in automotive sensing, subspace methods tend to be more attractive \cite{2020MIMO,patole2017automotive}.
Popular subspace methods, such as MUSIC \cite{schmidt1986multiple} and ESPRIT \cite{roy1989esprit}, all start from a covariance matrix, i.e.
\begin{align}
\S = 1/N \times \Y \Y^H,
\end{align}whereby the signal matrix $\textbf{Y} \in \mathbb{C}^{M \times N}$ is arranged as:
\begin{align}
\textbf{Y} \triangleq \begin{bmatrix}
\tilde{y}_0(0) & \tilde{y}_0(1) & \cdots & \tilde{y}_0(N-1)\\
\tilde{y}_1(0) & \tilde{y}_1(1) & \cdots & \tilde{y}_1(N-1)\\
\vdots & \vdots & \ddots & \vdots \\
\tilde{y}_{M-1}(0) & \tilde{y}_{M-1}(1) & \cdots & \tilde{y}_{M-1}(N-1)
\end{bmatrix}.
\end{align}
%The $\S$ matrix is useful when extracting multiple AoAs, while the $\T$ matrix is critical in estimating ToAs (or ranges).

There are two important things to be noted for the above covariance matrix.
First, the matrix multiplication in computing $\S$ incurs also a high complexity, which yet can be efficiently calculated in parallel.
Second, for massive-MIMO radars (with a large $M$), the required snapshot number $N$ will be very large, in order to estimate an unbiased covariance.
For a small length, the standard MUSIC pseudo-spectrum may be biased \cite{stoica1989music}, e.g. due to the so-called subspace swapping problem \cite{fortunati2019scaling}.
Such theoretical limitations of the standard MUSIC may be overcame by some post-processing algorithms, e.g. correcting the estimated pseudo-spectrum via well-designed weights \cite{vallet2015performance}.
Recently, a single-snapshot MUSIC was developed \cite{Liao2014MUSIC}.
As shown, even in the extreme case $N=1$, the high-resolution AoA estimation can be attained.

%\begin{figure}[!t]
%\centering
%\includegraphics[width=7cm]{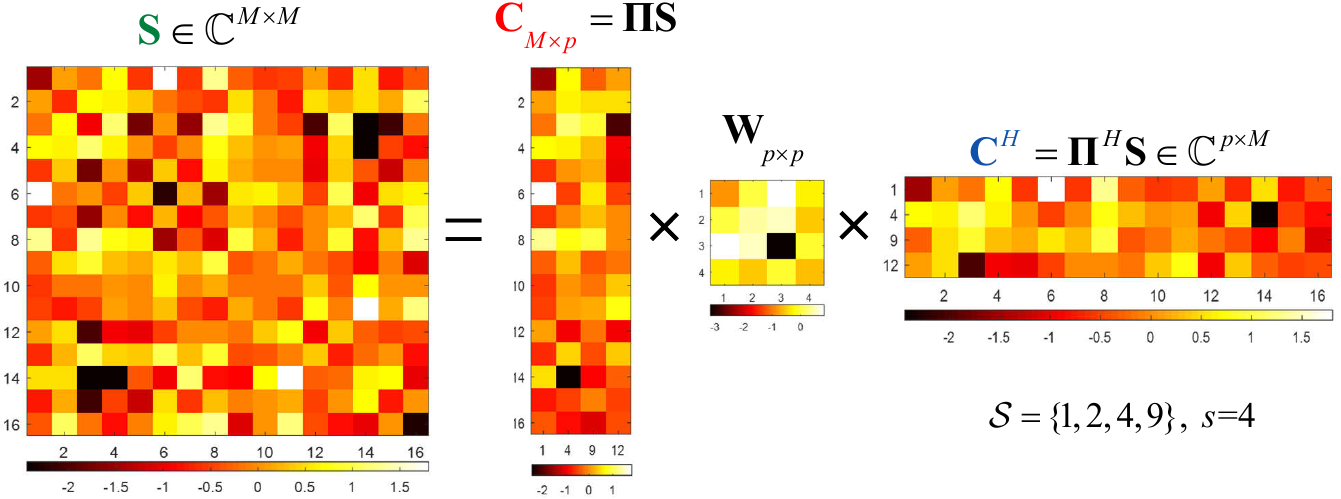}
%\caption{\textbf{Schematic paradigm of targets detection and estimation with sub-space methods}.
%	(a) Taking ULA for example, the data-flow of multiple channels are structured to a data matrix $\Y$.
%	(b) AoAs and time delays of $K$ sources are computed by $\S = \frac{1}{N} \Y \Y^H$ and $\T = \frac{1}{M} \Y^H \Y$.
%	After 2-D or 1-D paring mechanisms \cite{Trees2001Optimum,oh2015low}, the range-angle (R-A) plane is obtained. }
%\label{fig:3}
%\end{figure}

Note that, the emphasis of this work is on the subspace computation which is one prerequisite to the accurate estimation of AoA.
Assume there are $K$ unknown target points, we first compute SVD of the large covariance $\S \in \mathbb{C}^{M \times M}$, i.e.
\begin{equation}\label{eq:r}
\S
\: = \: \U \Si \V^H
\: = \: \U_K \Si_K \V_K^H + \epsilon_n^2 \U_{-K} \V_{-K}^H,
\end{equation}
where $\U_K = [\u_1 , \cdots , \u_K] \in \mathbb{C}^{M \times K}$ corresponds to the signal subspace,
and $\U_{-K}= [\u_{K+1} , \cdots , \u_M] \in \mathbb{C}^{M \times (M-K)}$ corresponds to the noise subspace.
For a Hermitian matrix $\S^H=\S$, we further have $\V=\U$.
As common, we assume $\sigma_1 (\S) > \sigma_2 (\S) > \cdots > \sigma_M (\S)$, and $\sigma_m (\S) = \epsilon_n^2~~(m>K)$.

Then, we uniformly partition the spatial angle range $[0, \pi]$ into a grid of $L$ values, $\{\theta_0, \theta_1, \cdots , \theta_{L-1}\}$.
For each angle $\theta_l$, we define the $M$-dimensional steering vector as:
\begin{equation*}
\a (\theta)
\: = \: \big[
a_0 (\theta), \, a_1 (\theta),\, \cdots , \, a_{M-1} (\theta)
\big]^H
\end{equation*}
where
\begin{equation*}
a_m (\theta )
\: \triangleq \:
\exp \big[\tfrac{j \, 2\pi \, d \, m \cdot \text{sin}(\theta)}{\lambda}  \big],
~~m = 0, 1, \cdots, M-1.
\end{equation*}

Finally, we are able to evaluate the standard MUSIC pseudo-spectrum, based on the computed subspaces, i.e.
\begin{eqnarray} \label{eq:p_music}
P_{\text{music}}(\theta)
= \tfrac{1}{\a (\theta)^H  \U_{-K} \U_{-K}^H \a(\theta)}
= \tfrac{1}{\a (\theta)^H (\I_M - \U_{K} \U_{K}^H )\a(\theta)}.
\end{eqnarray}
And then, each peak in the pseudo-spectrum $P_{\text{music}}(\theta)$ indicates one target point.
So, unknown AoAs, $\theta_k~(k=0,1,\cdots,K-1)$, are attained by
\begin{eqnarray}
\hat{\pmb{\theta}} =  \big\{\theta_k,~ \theta_k=\texttt{peak}\{P_{\text{music}}(\theta) \} \big\}.
\end{eqnarray}

In practice, the pseudo-spectrum $P_{\text{music}}(\theta)$ needs to be computed for every $\theta \in \{ \theta_0, \cdots , \theta_{L-1} \}$.
The computation of the whole pseudo-spectrum $P_{\text{music}}(\theta)$ requires a time complexity $\OM (MKL)$.
Fortunately, $P_{\text{music}}(\theta)$ can be evaluated in a highly efficient manner~\cite{roy1989esprit,Li2021Fast}, which is hence out of the scope of this work.
In the following, we are interested in the fast computation of a signal subspace $\U_K$.
As seen, the involved complexity mainly comes from the decomposition of large covariance $\S$, which is measured by $\OM (M^3)$; and more importantly, there is no parallel algorithm to accelerate it.
Thus, despite its high-resolution in automotive sensing, the processing latency of subspace methods would be intolerable for massive-MIMO radars, e.g. $>100$ ms for $M>500$.

% Please add the following required packages to your document preamble:
% \usepackage{booktabs}
\begin{table}[]
\caption{Time complexities of classical DoA estimation methods}
{\centering
\begin{tabular}{@{}llll@{}}
\toprule
\textbf{Methods}                             & \textbf{Complexity}                                     & \textbf{Resolution} & \textbf{Conditions}   \\ \midrule
MUSIC              & $\mathcal{O}(M^3)$  & high       & all          \\
Lanczos               &  $\mathcal{O}(p_MKM^2)$  & high       & $K \leq M $           \\
Matrix-inverse          & $\mathcal{O}(M^3)$  & high       & $\epsilon_n \rightarrow$ 0    \\
%Low-rank\cite{Li2020Fast}          & $\mathcal{O}(M^2)$  & high       &  $K \ll M$    \\
Propagator             & $\mathcal{O}(KMN)$  & medium     & $|\theta|\leq$70 deg \\
FFT Method                              & $\mathcal{O}(M\text{log}M)$  & low        & all          \\
\emph{Fast MUSIC 1}                              & $\mathcal{O}(Mp^2)$  & high        & $K \ll M $         \\
\emph{Fast MUSIC 2}                              & $\mathcal{O}(tpM^2)$  & high        & $K \ll M $         \\\bottomrule
\end{tabular}
\begin{tablenotes}
\footnotesize
\item 1. Here, $p$ ($K \leq p \ll M$) is an over-sampling parameter;
      $t$ ($t \geq 1$) is the number of iterations in updating the projection matrix.
\item 2. Note that, the real complexity of our fast method 2 is largely lower than $\mathcal{O}(tpM^2)$, as the polynimial complexity on $M$ comes from the matrix multiplication which is yet more efficient than SVD or matrix inverse.
\item 3. $p_M \sim \mathcal{O}(\log M)$ is the number of blocks in constructing a Krylov subspace in the Lanczos method.
\end{tablenotes}
}
\end{table}

\subsection{Considerations on Subspace Methods}

As mentioned, there are many works which target at reducing the time complexity of subspace methods, which have different advantages and disadvantages, as summarized in Table I.
From this comparative result, one long-standing challenge in MIMO radar signal processing is that the high resolution and the low complexity have the inherent conflict.
Meanwhile, most existing algorithms manage to \emph{compromise} such two contradictory performance aspects.
\section{Fast MUSIC Methods}
In this section, we develop two highly efficient methods for subspace estimation.
Note that, various low-rank approximation techniques, e.g. SVD, have been widely applied to the subspace-based AoA estimation \cite{2020Transform,2019DOA,2020Target}.
In contrast to previous works mainly balancing two contradictory aspects, here we design two fast-MUSIC algorithms which permits the highly accurate AoA estimation at a linearly scalable complexity.
In principle, our new methods are inspired by the random sampling/projection techniques \cite{wang2019scalable,drineas2005nystrom,woodruff2014sketching,2021Random}.
Despite their great interests in linear algebra or scientific computing, such randomized sketch methods have been rarely considered in the field of MIMO signal processing \cite{Li2020Fast,Li2020Randomized}.
In sharp contrast to our previous work \cite{Li2020Fast} that adopts a less efficient sketch form and enjoys no theoretical bound, this would be the first attempt to develop ultra-fast subspace methods with the close-form error bound in the approximated pseudo-spectrum.

\subsection{Fast MUSIC Method -- 1}

In our fast-MUSIC algorithm, we firstly approximate the large covariance $\S \in \mathbb{C}^{M \times M}$ with three small matrices (referred as matrix sketches), in the special form:
$\S\simeq \C\W\C^H$ (as in Figure 2).
Meanwhile, in order to minimize the computational complexity, we resort to a simple random sampling technique to abstract the involved matrix sketch $\C \in \CB^{M \times p}$ ($K \leq p \ll M$) from $\S$.
% We describe the rank-restricted Nystr\"om method in the following.

\begin{figure}[!t]
\centering
\includegraphics[width=8.6cm]{}
\caption{Illustration of randomized low-rank approximation of a large covariance matrix $\S \in \mathbb{R}^{N \times N}$, with rank$(\S)=4$. Here, the randomly selected column indexes set is $\mathcal{I}=\{1,4,2,9\}$, with $|\mathcal{I}|=4$. Thus, the low-dimensional sketch $\C$ consists of $p=|\mathcal{I}|$ columns of $\S$, while $\C^H$ involves $4$ rows. The resulting approximation error is less than $10^{-12}$. }
\label{fig:2}
\end{figure}
To be specific, we first configure one over-sampling parameter\footnote{Here, the term over-sampling indicates that the sampling parameter $p$ should be always greater than the number of target points $K$, i.e. $p> K$.}, which is denoted as $p$; and thus we uniformly sample $p$ items from $\{0, 1, \cdots, M-1 \}$ to form the indexing set $\IM$, i.e. $|\IM|=p$.
Then, the sketching matrix is abstracted as $\C = \S(:,\IM)= \S \Pii$, whereby $\Pii \in \mathbb{R}^{M\times p}$ is the equivalent sampling matrix.
As seen, $\C~\in \mathbb{R}^{M\times p}$ contains $p$ columns of $\S$ indexed by $\IM$.
In the equivalent sampling matrix $\Pii$, there are $p$ columns contains only one non-zero entry $\sqrt{M/p}$ (the remaining $(M-p)$ columns are all zeros).

By minimizing the approximation error, we further obtain another weight matrix $\W_{\text{opt}}$, i.e.
\begin{equation}\W_{\text{opt}}=\arg\min_{\W} \| \S - \C\W\C^H \|_F^2 = \C^\dag \S  {\C^\dag}^H.
\end{equation}

Since the involved pseudo-inverse and matrix multiplication incur a high complexity, we tend to the other scheme to compute $\W\simeq \W_{\text{opt}}$.
To be specific, we solve the above over-determined system by sketching both $\S$ and its random approximation.
This leads to the so-called Nystr\"om approximation \cite{tropp2017fixed,wang2019scalable}; and the weight matrix is determined by:
\begin{equation}
\W=\arg\min_{\W} \| \Pii^H(\S - \C\W\C^H)\Pii \|_F^2 =  (\Pii^H\S\Pii )^\dag.
\end{equation}

Finally, a large covariance matrix $\S$ is approximated by:
\begin{equation}
\S  \simeq \tilde{\S}
\: \triangleq \: \C \W \C^H .
\end{equation}
%where $\W^\dagger$ is the Moore-Penrose pseudo-inverse of $\W$.
%Since $\tilde{\S}$ is a low-rank matrix, its SVD can be efficiently performed in $\OM (M p^2)$ time.
%Let the matrices $\tilde{\U}$ ($M\times p$) and $\tilde{\Lam}$ ($p\times p$) be computed in Line~\ref{alg:nystrom:svd} of Algorithm~\ref{alg:nystrom}.
%Then
%\begin{equation*}
%\tilde{\S}
%\: = \: \tilde{\U} \tilde{\Lam}^2 \tilde{\U}^H
%\end{equation*}
%is the SVD of $\tilde{\S}$.

Relying on this low-rank representation of $\S$, its SVD can be efficiently computed via multiple SVDs of these small sketches.
Assume $\C=\U_c \pmb{\Sigma}_c \V_c^H$, the covariance matrix $\S$ would be further approximated by:
\begin{equation*}
\S  \simeq \U_c \pmb{\Sigma}_c \V_c^H  \W (\U_c \pmb{\Sigma}_c \V_c^H)^H.
\end{equation*}

After defining $\textbf{B}\triangleq \pmb{\Sigma}_c \V_c^H \W \V_c \pmb{\Sigma}_c^H$
and computing its SVD, i.e. $\textbf{B}= \U_B \pmb{\Sigma}_B \V_B^H \in \CB^{p \times p},$
we would obtain the rank-restricted approximate SVD of $\S$, i.e.
\begin{equation*}
\S \simeq \U_c \U_B \pmb{\Sigma}_B  \V_B^H  \U_c ^H = \bar{\U}_K  \pmb{\Sigma}_B \bar{\U}_K ^H,
\end{equation*}
whereby $\bar{\U}_K \triangleq \U_c \textbf{U}_B$ is one unitary matrix.
%Rather than directly computing $\mathbf{U}_K$, we obtain the approximated signal subspace $\tilde{\U}_K$ (i.e. consisting of $K$ columns of $\bar{\U}_K$, corresponding to the $K$ largest eigen-values of $\pmb{\Sigma}_B$), and
Relying on it, we finally obtain an approximated pseudo-spectrum, i.e.
\begin{equation}\label{eq:theory:music_nys}
\tilde{P}_{\textrm{{music}}}  (\theta )
\: = \: \frac{1 }{ \a (\theta )^H ( \I_M - \tilde{\U}_K \tilde{\U}_K^H ) \a (\theta ) } .
\end{equation}

The schematic flow of our fast-MUSIC method 1 is summarized in \textbf{Algorithm~\ref{alg:nystrom}}, which actually obtains the rank-restricted SVD of a large covariance and thus reduces the time complexity significantly.
Based on the above analysis, we easily see that:
Step 1 (Lines~\ref{alg:nystrom:step1:begin} to \ref{alg:nystrom:step1:end}) costs $\OM (p M )$ time.
Step 2 (Lines~\ref{alg:nystrom:step2:begin} to \ref{alg:nystrom:step2:end}) costs $\OM (p^2 M )$ time.
Combined together, the overall complexity of our fast method 1 in estimating the signal subspace is $\OM (p^2 M )$.

\begin{algorithm}[t]  %[h]
	\caption{Fast MUSIC Method -- 1.}
	\label{alg:nystrom}
	\begin{small}
		\begin{algorithmic}[1]
			\STATE {\bf Input}: a covariance matrix $\S$,
			the number of target points $K$, and over-sampling parameter $p$ ($\geq K$).
			\STATE {\bf // Step 1}: Random Sampling Method  \label{alg:nystrom:step1:begin}
			\STATE $\IM \longleftarrow$ randomly sampling $p$ indices from $\{ 0, 1, \cdots , M \}$;
			\STATE $\C \longleftarrow$ the column of $\S$ indexed by $\IM$;
			\STATE $\W \longleftarrow$ the pseudo-inverse of $\S(\IM,\IM)$;\label{alg:nystrom:step1:end}
			\STATE {\bf // Step 2}: Rank Restriction \label{alg:nystrom:step2:begin}
			\STATE Compute the SVD of $\C=\U_c \pmb{\Sigma}_c \V_c^H$;
			\STATE Compute the SVD: $\pmb{\Sigma}_c \V_c^H \W \V_c \pmb{\Sigma}_c^T = \U_B \pmb{\Sigma}_B \V_B^H$;\label{alg:nystrom:svd}
			\STATE Compute the signal subspace: $\tilde{\U}_K   \longleftarrow $ $\U_c \U_B$;\label{alg:nystrom:step2:end}
			\STATE {\bf // Step 3}: Approximate MUSIC
			\STATE Compute $\tilde{P}_{\textrm{music}}$ for all $\theta$. \label{alg:nystrom:step3}
		\end{algorithmic}
	\end{small}
\end{algorithm}

%\begin{table}[tp]
%\centering
%\caption{Time complexities of new subspace methods}
%\label{tab:time}
%\begin{threeparttable}
%\begin{tabular}{c|c|c}
%\hline
%\multirow{2}*{Method}   &  \multicolumn{2}{|c}{Fast MUSIC} \\
%\cline{2-3}
%        & \small{Fast MUSIC: sampling} & \small{Fast MUSIC: projection} \\
%\hline
%Compute $\tilde{\U}_{K}$   & $\OM (M p^2 )$ & $\OM (t M^2 K)$ \\
%\hline
%\end{tabular}
%\begin{tablenotes}
%\footnotesize
%\item Here, $p$ ($K \leq p \ll M$) is a user-specified over-sampling parameter, and $t$ ($t \geq 1$) is the number of iterations when computing the projection matrix.
%\end{tablenotes}
%\end{threeparttable}
%\end{table}

\subsection{Fast MUSIC Method -- 2}

The above random sampling method enables a highly efficient approximation of large covariance as well as the pseudo-spectrum.
In practice, this approximated pseudo-spectrum in eq. (10) is almost accurate, which suffices to attain the high-resolution DoAs estimation.
Nevertheless, one potential risk is that its error bound would be relatively weak in the worst case; see Section \ref{sec:theory:power} for the theoretical analysis.

If the precise matrix approximation was emphasized, we may resort to the other random projection technique \cite{gittens2016revisiting}.
Here, the main difference with regard to our fast method 1 is that, when abstracting a sketch $\C$, a more informative projection matrix $\Pii \in \mathbb{C}^{M \times p}$ is adopted, i.e. $\C=\S\Pii$, which will be computed in an \emph{iterative} manner.
Compared with the random sampling based matrix sketch as in \textbf{Algorithm 1}, this random projection based sketch incorporates more information of $\S$ and permits the more accurate approximation.

To begin with, we first initialize one $M \times p$ random matrix $\Pii$, whose entries are i.i.d.\ values from the Normal distribution $\NM (0, 1)$.
Then, we obtain one randomly projected sketch $\C^{(1)}=\S \Pii \in \CB^{M \times p} $.
To improve the quality of this sketching matrix, we further compute the orthonormal basis of $\C^{(1)}$,
\begin{equation*} \label{eq:power}
[\Q_c,\R_c]=\texttt{qr}(\S \Pii),~~\V^{(1)} \triangleq \Q_c,
\end{equation*}
where $\texttt{qr}(\cdot)$ denotes the QR decomposition with the time complexity $\mathcal{O}(p^2M)$; and therefore, the $M\times p$ matrix $\V $ gives another improved projection matrix.
Subsequently, we obtain one updated sketch $\C^{(2)}=\S \V^{(1)} \in \CB^{M \times p}$, and further obtain its orthonormal basis $\V^{(2)}$, i.e.
\begin{equation} \label{eq:power}
[\Q_c,\R_c]=\texttt{qr}\left( \S \V^{(1)} \right),~~\V^{(2)} \triangleq  \Q_c.
\end{equation}

The above process would be repeated for $t$ times, and finally we achieve a sufficiently good matrix sketch $\C^{(t)} = \S \V^{(t)} \in \CB^{M \times p}$.
Similar to eq. (8), the $p\times p$ weight matrix is:
\begin{equation*} \label{eq:power}
\W = ( { \V^{(t)}}^H \S \V^{(t)} )^\dag .
\end{equation*}
With $\C=\C^{(t)}$ and $\W$, we are able to obtain the projection-based matrix approximation, i.e. $\S \simeq \C \W  \C^H$.
In a similar way, the SVD of $\S$ is approximately attained by multiple SVDs on the small sketches, i.e. $\S \simeq \tilde{\U}_K \pmb{\Sigma}_B \tilde{\U}_K ^H$.
Finally, the pseudo-spectrum $\tilde{P}_{\textrm{{music}}} (\theta )$ is estimated, as in eq. (10).

The time complexity of our fast method 2 is measured by $\OM (t p M^2)$, see \textbf{Algorithm 2}.
Note that, despite the similar random projection as in the Lanczos method, here our method targets at deriving small sketches and a matrix approximation in eq. (9), which is faster than the block-Lanczos method, as shown in Table I.
In our fast method 2, the user-specific parameter $t$ trades off the complexity and the accuracy.
E.g., a large $t$ improves the projection matrix and the matrix approximation accuracy, but also increases the complexity.

One practical concern is that the noise may become non-Gaussian.
In this case, a generalized eigen-decomposition problem can be similarly formulated \cite{1995Noise}, by resorting to high-order cumulates techniques.
As a result, EVD/SVD can be still used to acquire unknown AoA.
In this regard, our new methods are readily applicable to reduce the time complexity of subspace computation in the case of non-Gaussian noises.

\section{Performance Bounds for Fast-MUSIC} \label{sec:theory}

Our primary concern is then the accuracy of the approximated pseudo-spectrum $\tilde{P}_{\textrm{{music}}} (\theta )$ relative to its exact version $P_{\textrm{{music}}} (\theta)$.
In the following, we analyze the theoretical performance of our fast-MUSIC methods, giving the relative error between the approximated pseudo-spectrum in eq. (10) and the exact pseudo-spectrum in eq. (\refeq{eq:p_music}) as standard MUSIC.

It should be noted that, from a theoretical perspective, the existing theories or error bounds developed for randomized matrix sketching are inapplicable to our concerned problem.
This is to say, the ultimate aim of our fast-MUSIC method is to approximate an exact pseudo-spectrum $P_{\textrm{music}}(\theta)$, and then to acquire the high-resolution AoAs estimation.
Unfortunately, most existing theories only bound the matrix approximation errors\cite{drineas2005nystrom,gittens2016revisiting,wang2019scalable}, e.g. $\| \S - \tilde{\S} \|_F$ or $\| \S - \tilde{\S} \|_2$.
Accordingly, such matrix norm bounds should not lend any support to our fast-MUSIC methods.

In this section, we show the relative error of our estimated pseudo-spectrum $\tilde{P}_{\textrm{music}}(\theta)$ would be also bounded, after approximating $\S$ with $\tilde{\S}$, which is more challenging than to derive the classical matrix norm bounds.
Our Theorems~\ref{thm:nys} and \ref{thm:power} establish the theoretical lower bounds for $\tilde{P}_{\textrm{{music}}}(\theta )$.
Theorem~\ref{thm:power2} establishes the theoretical upper bound for it.

\subsection{Lower/Upper Bounds of $\tilde{P}_{\textrm{music}}(\theta)$}
First, we expect to have the {\it lower bound} in the following form:
there is a bounded positive number $\alpha_l$ such that
\begin{equation}
\tilde{P}_{\textrm{{music}}} (\theta )
\: \geq \: \tfrac{1}{1 + \alpha_l} \, {P}_{\textrm{{music}}} (\theta ),
\end{equation}
for all $\theta$.
Such a lower bound is of great importance, as it guarantees the approximated pseudo-spectrum $\tilde{P}_{\textrm{{music}}} (\theta )$ shall not miss the target peaks (see the theoretical analysis in Section IV-E).
Theorems~\ref{thm:nys} and \ref{thm:power} are such lower bounds for our fast methods 1 and 2, respectively.

Second, we also expect to have an {\it upper bound} in the following form:
there is a positive number $\alpha_u$ such that
\begin{equation}
\tilde{P}_{\textrm{{music}}} (\theta )
\: \leq \: \tfrac{1}{1 - \alpha_u} \, {P}_{\textrm{{music}}} (\theta ),
\end{equation}
for all $\theta$.
Such an upper bound guarantees our approximated pseudo-spectrum $\tilde{P}_{\textrm{{music}}} (\theta )$ shall not interpret the baseline fluctuation in the non-target region as one false target peak (see the analysis Section IV-E).
Theorem~\ref{thm:power2} is such an upper bound for the fast method 2.
For the fast method 1, we do not have such an upper bound at the current stage, which would be left to the future study.

\begin{algorithm}[t]  %[h]
	\caption{Fast MUSIC Method 2.}
	\label{alg:power}
	\begin{small}
		\begin{algorithmic}[1]
			\STATE {\bf Input}: spatial covariance matrix $\S$ ($M\times M$) and target rank $K$.
			\STATE {\bf // Step 1}: {\it Random Projection}
			\STATE $\Pii \longleftarrow $ $M\times K$ matrix with entries i.i.d.\ drawn from $\NM (0, 1)$;
            \STATE Random projection based matrix sketch: $\C \longleftarrow \S \Pii$;
			\STATE Repeat $\V \longleftarrow \orth (\C )$ and $\C\longleftarrow \S \V$ for $t$ times;
            \STATE {\bf // Step 2}: {\it Randomized Matrix Approximation} \label{alg:power:step2:begin}
			\STATE $\C   \longleftarrow \S \V$
				and $\W   \longleftarrow  (\V^H \S \V )^\dag$;
		    \STATE Compute the SVD of $\C=\U_c \pmb{\Sigma}_c \V_c^H$;
			\STATE Compute the SVD: $\pmb{\Sigma}_c \V_c^H \W \V_c \pmb{\Sigma}_c^T = \U_B \pmb{\Sigma}_B \V_B^H$;\label{alg:nystrom:svd}
			\STATE Compute the signal subspace: $\tilde{\U}_K   \longleftarrow $ $\U_c \U_B$;
			\STATE {\bf // Step 3}: {\it Approximate MUSIC}
			\STATE Compute $\tilde{P}_{\textrm{music}} (\theta) $ for all $\theta$. \label{alg:power:step3}
		\end{algorithmic}
	\end{small}
\end{algorithm}

\subsection{Lower Bounds for Fast Method 1} \label{sec:theory:nys}

Let $\S_K = \U_K \Si_K \U_K^H$ be the best rank-$K$ approximation to $\S$; $\tilde{\S} = \C \W \C^H$ be randomized matrix approximation to $\S$ via uniform column sampling (i.e. $\C= \S \Pii$ and $\Pii$ is the sampling matrix);
and $\tilde{\U}$ be the orthonormal basis of $\tilde{\S}$.
%We then have the following result.

\begin{theorem} \label{thm:nys}
	Let the notation be defined in the above.
	Let $\delta \in (0, 1)$ be any user-specified constant; the row coherence of $\U_K$ be defined as $\mu (\U_K)\triangleq \: \tfrac{M}{K} \max_{i} \big\| (\U_K)_{i:} \big\|_2^2 \: \in \: \big[1, \tfrac{M}{K} \big]$ \cite{candes2009exact,gittens2016revisiting,wang2016spsd}.
	For a user-specific sampling parameter
	\begin{equation*}
	p \: \geq \: 4.5 \, \mu (\U_K)\, K  \cdot \log \tfrac{K}{\delta},
	\end{equation*}
	the following relation holds with probability at least $(1-\delta)$:
	\begin{equation}
	\sqrt{\tfrac{P_{\textrm{music}} (\theta) }{\tilde{P}_{\textrm{music}} (\theta) }}
	\: \leq \: \: 1 + 2\sqrt{\tfrac{M^2}{p}}  \tfrac{ \sigma_{K+1} (\S ) }{ \sigma_{K} (\S ) } .
	\end{equation}
\end{theorem}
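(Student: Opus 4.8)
\emph{Proof plan.} The plan is to translate the relative pseudo-spectrum error into a bound on the angle between the true signal subspace $\range(\U_K)$ and the sketched one $\range(\tilde\U_K)$, and then to control that angle with randomized column-sampling theory. Writing $\PP \triangleq \U_K\U_K^H$ and $\tilde\PP \triangleq \tilde\U_K\tilde\U_K^H$ for the two orthogonal signal-subspace projectors, the definitions of $P_{\textrm{music}}$ in \eqref{eq:p_music} and $\tilde P_{\textrm{music}}$ in \eqref{eq:theory:music_nys} give at once
\[
\sqrt{ \tfrac{ P_{\textrm{music}} (\theta) }{ \tilde P_{\textrm{music}} (\theta) } }
\: = \:
\frac{ \big\| (\I_M - \tilde\PP)\, \a(\theta) \big\|_2 }{ \big\| (\I_M - \PP)\, \a(\theta) \big\|_2 } ,
\]
so the theorem is an estimate of how much the approximate noise projector $\I_M - \tilde\PP$ can lengthen $(\I_M-\PP)\a(\theta)$ relative to the exact one.

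Next I would split $\a(\theta) = (\I_M - \PP)\a(\theta) + \PP\a(\theta)$ and apply $\I_M - \tilde\PP$, giving $(\I_M-\tilde\PP)\a = (\I_M-\tilde\PP)(\I_M-\PP)\a + (\I_M-\tilde\PP)\PP\a$. Since $\|\I_M - \tilde\PP\|_2 = 1$ and $\|\PP\a(\theta)\|_2 \le \|\a(\theta)\|_2 = \sqrt{M}$, the triangle inequality yields
\[
\frac{ \big\| (\I_M - \tilde\PP)\,\a \big\|_2 }{ \big\| (\I_M - \PP)\,\a \big\|_2 }
\: \le \:
1 + \frac{ \sqrt{M}\, \big\| (\I_M - \tilde\PP)\,\U_K \big\|_2 }{ \big\| (\I_M - \PP)\, \a(\theta) \big\|_2 } ,
\]
and $\big\| (\I_M - \tilde\PP)\,\U_K \big\|_2$ is exactly $\sin\Theta\big(\range(\U_K),\range(\tilde\U_K)\big)$, the sine of the largest principal angle. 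Using $\|(\I_M-\PP)\a(\theta)\|_2 \ge 1$ (valid away from the true AoAs, which are the only place it could fail), it remains to prove $\| (\I_M - \tilde\PP)\,\U_K \|_2 \le 2\sqrt{M/p}\cdot \sigma_{K+1}(\S)/\sigma_K(\S)$ with probability $1-\delta$.

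For that subspace estimate I would exploit the sampling structure $\C = \S\Pii$, with $\Pii$ the rescaled column-selection matrix, via two ingredients. First, since $\U_{-K}$ has orthonormal columns and row subsampling is a contraction, $\|\U_{-K}^H\Pii\|_2 \le \sqrt{M/p}$ holds deterministically. Second, the hypothesis $p \ge 4.5\,\mu(\U_K)\,K\log(K/\delta)$ is precisely the regime in which a matrix Chernoff bound controls the least singular value of the sampled rows of $\U_K$, so that $\|(\U_K^H\Pii)^\dagger\|_2 = \OM(1)$ with probability at least $1-\delta$. Combining these with the deterministic range-capture inequality of randomized sketching,
\[
\big\| (\I_M - \PP_{\C})\,\U_K \big\|_2
\: \le \:
\tfrac{ \sigma_{K+1}(\S) }{ \sigma_K(\S) } \, \big\| \U_{-K}^H \Pii\, (\U_K^H\Pii)^\dagger \big\|_2 ,
\]
and noting that restricting $\tilde\S = \C\W\C^H$ to rank $K$ costs at most a universal constant factor in the angle, I get $\| (\I_M - \tilde\PP)\,\U_K \|_2 \lesssim \sqrt{M/p}\cdot\sigma_{K+1}(\S)/\sigma_K(\S)$; tracking the Chernoff constant produces the stated leading factor $2$, and substituting into the triangle-inequality bound finishes the proof.

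The genuinely delicate part is this last reduction: pushing the range-capture estimate through the Nystr\"om weight $\W$ and the rank-$K$ truncation, and turning it into a $\sin\Theta$ bound carrying the clean spectral-gap factor $\sigma_{K+1}(\S)/\sigma_K(\S)$ --- here one uses that, at reasonable SNR, the Davis--Kahan denominator $\sigma_K(\S)-\sigma_{K+1}(\S)$ may be replaced by $\sigma_K(\S)$ up to a constant --- without leaking extra powers of $M$ or $p$. By contrast, the opening reduction, which converts the relative pseudo-spectrum error into a subspace error and is exactly what the existing matrix-norm theory does not supply, is short but conceptually the heart of the argument; the auxiliary bound $\|(\I_M-\PP)\a(\theta)\|_2 \ge 1$ should also be stated and justified explicitly.
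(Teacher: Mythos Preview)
Your plan is essentially the paper's argument. The reduction to $\big\|(\I_M-\tilde\PP)\a\big\|_2 / \big\|(\I_M-\PP)\a\big\|_2$, the triangle-inequality split $\a=\PP\a+(\I_M-\PP)\a$, the use of $\|\U_K^H\a\|_2\le\sqrt{M}$ and $\|(\I_M-\PP)\a\|_2\ge 1$, the Chernoff bound on $\sigma_K(\U_K^H\Pii)$ under the hypothesis $p\ge 4.5\,\mu(\U_K)K\log(K/\delta)$, and the deterministic $\|\Pii\|_2=\sqrt{M/p}$ all match the paper line for line. The paper packages the subspace step as a pointwise bound on $\big\|(\I_M-\tilde\PP)\U_K\U_K^H\a\big\|_2$ rather than the operator norm $\big\|(\I_M-\tilde\PP)\U_K\big\|_2$, but this is cosmetic: its key lemma constructs $\tilde\x=\tilde\U^H\S\Pii(\U_K^H\Pii)^\dag\Si_K^{-1}\U_K^H\a$ and reduces to exactly your range-capture inequality $\|(\S-\S_K)\Pii(\U_K^H\Pii)^\dag\Si_K^{-1}\|_2$.

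The one place you overcomplicate things is the step you flag as ``genuinely delicate.'' In the paper there is \emph{no} rank-$K$ truncation and \emph{no} Davis--Kahan: the approximate projector $\tilde\PP$ is taken to be the projector onto the full column space of $\tilde\S=\C\W\C^H$, and the Nystr\"om identity $(\C\W\C^H)(\C\W\C^H)^\dag\C=\C$ shows $\range(\C)\subseteq\range(\tilde\S)$, hence $\tilde\PP\,\S\Pii=\S\Pii$ exactly. So $\tilde\PP$ already dominates $\PP_\C$, your range-capture inequality applies directly with the clean ratio $\sigma_{K+1}(\S)/\sigma_K(\S)$, and no replacement of $\sigma_K-\sigma_{K+1}$ by $\sigma_K$ is ever needed. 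Setting the Chernoff parameter to $\eta=0.75$ then gives the constant $2$.
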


Thus, Theorem~\ref{thm:nys} establishes the lower bound on $\tilde{P}_{\textrm{music}} (\theta)$.
%It indicates that if $P_{\textrm{music}} (\theta)$ have multiple peaks, then $\tilde{P}_{\textrm{music}} (\theta) $ should also contain such peaks (which is what we care about).
As found, the \emph{relative} approximation error is basically
\begin{equation*}
\OM \left( 2\sqrt{\tfrac{M^2}{p}}  \cdot \tfrac{ \sigma_{K+1} (\S ) }{ \sigma_{K} (\S ) } \right).
\end{equation*}
If the SNR is high, we have $\sigma_{K} (\S ) \gg \sigma_{K+1} (\S ) = \epsilon_n^2$; and therefore; the approximation error could be very small.
The proof of Theorem~\ref{thm:nys} can be found in Appendix C.

\subsection{Lower Bound for Fast Method 2} \label{sec:theory:power}

Let $\Pii$ be an $M\times p$ initial projection matrix with i.i.d.\ entries drawn from $\NM (0, 1)$.
Let the $M\times p$ matrix $\V^{(t)}$ be an improved projection matrix after $t$ iterations;
in other words, $\V^{(t)}$ is the orthonormal basis of $\S^t \Pii$.
By definition, $\C = \S \V^{(t)}$ and $\W =  ({\V^{(t)}}^H \S \V^{(t)} )^\dag$, and $\tilde{\S} = \C \W \C^H$ is an approximation to $\S$.
Let $\tilde{\U}$ be the orthonormal basis of $\tilde{\S}$.

\begin{theorem} \label{thm:power}
	Let the notation be defined in the above.
	Repeat the iteration projection for $t$ ($t \geq 0$) times.
	Let $\delta \in (0, 1)$ be any constant.
	The following relation holds with probability at least $1-\delta$:
	\begin{equation}
	\sqrt{\tfrac{P_{\textrm{music}} (\theta)}{\tilde{P}_{\textrm{music}} (\theta) }}
	\: \leq \: 1 +   \tfrac{ \sqrt{M^2 K} }{\delta} \big( \tfrac{ \sigma_{K+1} (\S ) }{ \sigma_{K} (\S ) } \big)^{t+1} .
	\end{equation}
\end{theorem}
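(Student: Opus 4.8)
\emph{Proof plan.} The strategy is to turn the pseudo‑spectrum ratio into a subspace‑perturbation quantity and then to recognize the relevant subspace as the one produced by randomized subspace (power) iteration on $\S$. Since $\I_M-\U_K\U_K^H$ and $\I_M-\tilde{\U}_K\tilde{\U}_K^H$ are Hermitian idempotents, $\a(\theta)^H(\I_M-\U_K\U_K^H)\a(\theta)=\|(\I_M-\U_K\U_K^H)\a(\theta)\|_2^2$ and likewise for the tilde version, so from \eqref{eq:p_music} and \eqref{eq:theory:music_nys},
\[
\sqrt{\tfrac{P_{\textrm{music}}(\theta)}{\tilde{P}_{\textrm{music}}(\theta)}}
\;=\;\frac{\big\|(\I_M-\tilde{\U}_K\tilde{\U}_K^H)\a(\theta)\big\|_2}{\big\|(\I_M-\U_K\U_K^H)\a(\theta)\big\|_2}
\;\leq\; 1+\frac{\big\|(\U_K\U_K^H-\tilde{\U}_K\tilde{\U}_K^H)\a(\theta)\big\|_2}{\big\|(\I_M-\U_K\U_K^H)\a(\theta)\big\|_2}.
\]
Using $\U_K\U_K^H-\tilde{\U}_K\tilde{\U}_K^H=(\I_M-\tilde{\U}_K\tilde{\U}_K^H)\U_K\U_K^H-\tilde{\U}_K\tilde{\U}_K^H(\I_M-\U_K\U_K^H)$, the identity $\|\a(\theta)\|_2=\sqrt{M}$, and the fact that $\tilde{\U}_K\tilde{\U}_K^H$ is a contraction, the numerator is at most $\sqrt{M}\,\eta+\|(\I_M-\U_K\U_K^H)\a(\theta)\|_2$, where $\eta:=\|(\I_M-\tilde{\U}_K\tilde{\U}_K^H)\U_K\|_F$. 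Hence, in the regime of interest for detection where $\a(\theta)$ is not essentially contained in the signal subspace (so that $\|(\I_M-\U_K\U_K^H)\a(\theta)\|_2$ is bounded below by a constant), the whole ratio is controlled by $\OM(\sqrt{M}\,\eta)$, and everything reduces to bounding the subspace distance $\eta$.

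Next I would observe that, generically, $\range(\tilde{\S})=\range(\S\V^{(t)})=\range(\S^{t+1}\Pii)$, so $\tilde{\U}_K\tilde{\U}_K^H$ is exactly the orthogonal projector delivered by $t+1$ applications of $\S$ in randomized subspace iteration (the $t$ reorthonormalization steps of Algorithm~\ref{alg:power} together with the final multiplication $\C=\S\V^{(t)}$). Splitting $\S$ at its $K$‑th singular value and writing $\Omega_1=\U_K^H\Pii$, $\Omega_2=\U_{-K}^H\Pii$, the standard deterministic ``structural'' bound for this projector gives
\[
\eta\;\leq\;\Big(\tfrac{\sigma_{K+1}(\S)}{\sigma_{K}(\S)}\Big)^{t+1}\,\big\|\Omega_2\,\Omega_1^{\dagger}\big\|_F .
\]
This is the step that contributes the $(\sigma_{K+1}(\S)/\sigma_{K}(\S))^{t+1}$ factor and is exactly where the benefit of the iterations enters.

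Finally, because $\U_K,\U_{-K}$ have orthonormal columns and $\Pii$ has i.i.d.\ $\NM(0,1)$ entries, $\Omega_1\in\CB^{K\times p}$ and $\Omega_2\in\CB^{(M-K)\times p}$ are independent Gaussian matrices; a standard second‑moment computation gives $\EB\|\Omega_2\Omega_1^\dagger\|_F\leq c\sqrt{KM}$ (for a modest over‑sampling), and Markov's inequality then yields $\|\Omega_2\Omega_1^\dagger\|_F\leq c\sqrt{KM}/\delta$ with probability at least $1-\delta$ — this is the source of the $1/\delta$ in the statement. Combining the three displays, $\sqrt{M}\,\eta=\OM\!\big(\sqrt{M^2K}\,\delta^{-1}(\sigma_{K+1}(\S)/\sigma_{K}(\S))^{t+1}\big)$ with probability $\geq 1-\delta$, which is the claimed bound.

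I expect the main obstacle to be the first step rather than the randomized‑linear‑algebra part: the denominator $\|(\I_M-\U_K\U_K^H)\a(\theta)\|_2=1/\sqrt{P_{\textrm{music}}(\theta)}$ becomes small exactly where the true spectrum peaks, so obtaining a bound that is genuinely uniform in $\theta$ with the clean leading constant $1$ requires either the normalization hypothesis above or a sharper pairing of the two $\a(\theta)$‑dependent terms (e.g.\ exploiting $\|\tilde{\U}_K\tilde{\U}_K^H(\I_M-\U_K\U_K^H)\a\|_2\le\eta\,\|(\I_M-\U_K\U_K^H)\a\|_2$ when the sketch is rank‑$K$, since $\eta\le 1$). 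The only other point needing care is keeping the Frobenius‑norm bookkeeping tight enough that the dimension factor is $\sqrt{M^2K}$ and the power iteration contributes $(\sigma_{K+1}/\sigma_K)^{t+1}$ and nothing larger; this parallels the argument behind Theorem~\ref{thm:nys}.
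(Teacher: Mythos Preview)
Your plan is essentially correct and lands on the same bound via the same randomized--linear--algebra machinery as the paper, but two steps are organized differently and it is worth seeing how the paper streamlines them.

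\emph{Reduction step.} You go through the projector difference $\U_K\U_K^H-\tilde{\U}_K\tilde{\U}_K^H$, which produces the extra $\|\tilde{\U}_K\tilde{\U}_K^H(\I_M-\U_K\U_K^H)\a\|_2$ term and forces you into the principal--angle fix at the end. The paper avoids this entirely: it inserts $\I_M=\U_K\U_K^H+\U_{-K}\U_{-K}^H$ inside $(\I_M-\tilde\U\tilde\U^H)\a$ and uses $\|(\I_M-\tilde\U\tilde\U^H)\U_{-K}\U_{-K}^H\a\|_2\le\|\U_{-K}\U_{-K}^H\a\|_2=\|(\I_M-\U_K\U_K^H)\a\|_2$, so the leading constant $1$ drops out immediately and only $\|(\I_M-\tilde\U\tilde\U^H)\U_K\U_K^H\a\|_2$ remains to be bounded.

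\emph{Structural and probabilistic bounds.} Rather than citing the Frobenius structural bound with $\|\Omega_2\Omega_1^\dagger\|_F$ and then Markov on its expectation, the paper proves a vector version directly (its Lemma~\ref{lem:power1}): it constructs $\tilde\x=\tilde\U^H\S^{t+1}\Pii(\U_K^H\Pii)^\dagger\Si_K^{-t-1}\U_K^H\a$ and shows $\|(\I_M-\tilde\U\tilde\U^H)\U_K\U_K^H\a\|_2\le\|\U_K\U_K^H\a-\tilde\U\tilde\x\|_2\le(\sigma_{K+1}/\sigma_K)^{t+1}\|\Pii\|_2\|(\U_K^H\Pii)^\dagger\|_2\|\U_K^H\a\|_2$. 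It then bounds $\|\Pii\|_2\lesssim\sqrt{M}$ (Vershynin) and $\sigma_K^{-1}(\U_K^H\Pii)\le\sqrt{K}/\delta$ (Rudelson--Vershynin small--ball for a square $K\times K$ Gaussian, since here $p=K$), giving the $\sqrt{MK}/\delta$ factor. Your $\EB\|\Omega_2\Omega_1^\dagger\|_F\le c\sqrt{KM}$ plus Markov achieves the same thing; the paper's route just isolates where the $1/\delta$ comes from (the smallest--singular--value tail) without needing the Frobenius bookkeeping you flag at the end.

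\emph{The denominator.} Your diagnosis of the ``main obstacle'' is exactly right, and the paper does not really resolve it rigorously either: it simply asserts the heuristic $\|(\I_M-\U_K\U_K^H)\a(\theta)\|_2\ge 1$ and $\|\U_K^H\a\|_2\le\sqrt{M}$ (its eqs.~\eqref{eq:inequality_spetrum_1}--\eqref{eq:inequality_spetrum_ratio}) to convert $\sqrt{MK}$ into $\sqrt{M^2K}$. So your caveat about uniformity in $\theta$ applies equally to the paper's argument.
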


Theorem~\ref{thm:power} establishes a lower bound for $\tilde{P}_{\textrm{music}} (\theta)$ which is obtained via random projection in Section~\eqref{eq:power}.
As seen, the relative approximation error is basically
\begin{equation*}
\OM \left(  \sqrt{M^2K}  \cdot
\big( \tfrac{ \sigma_{K+1} (\S ) }{ \sigma_{K} (\S ) } \big)^{t+1} \right) .
\end{equation*}
That means, the approximation errors exponentially converge to zero with $t$.
Even if the SNR is not very high, our fast method 2 converges to the precise estimation in several iterations (i.e. $t\geq 2$).
The proof can be found in Appendix D.

\subsection{Upper Bound for Fast Method 2} \label{sec:theory:power_up}

\begin{theorem} \label{thm:power2}
	Let the notation be defined in Section~\ref{sec:theory:power}.
	Repeat the iterative projection for $t$ ($t \geq 0$) times.
	Then, with probability at least $1-\delta$,
	\begin{align}
	\sqrt{\tfrac{P_{\textrm{music}} (\theta) }{\tilde{P}_{\textrm{music}} (\theta) }}
	\: \geq \: 1 -   \tfrac{ \sqrt{M^2 K} }{\delta} \,
	\big( \tfrac{ \sigma_{K+1} (\S ) }{ \sigma_{K} (\S ) } \big)^{t+1}.
	\end{align}
\end{theorem}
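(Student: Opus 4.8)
The plan is to reprise the argument behind Theorem~\ref{thm:power}, turning its upper bound into a matching lower bound by replacing one use of the triangle inequality with the reverse triangle inequality. Write $\Q \triangleq \I_M - \U_K\U_K^H$ and $\tilde\Q \triangleq \I_M - \tilde\U_K\tilde\U_K^H$ for the exact and approximate noise‑subspace projectors. Both are orthogonal projectors, so $P_{\textrm{music}}(\theta) = \| \Q\,\a(\theta) \|_2^{-2}$ and $\tilde P_{\textrm{music}}(\theta) = \| \tilde\Q\,\a(\theta) \|_2^{-2}$, whence
\[
\sqrt{\tfrac{ P_{\textrm{music}}(\theta) }{ \tilde P_{\textrm{music}}(\theta) }}
\: = \: \frac{ \| \tilde\Q\,\a(\theta) \|_2 }{ \| \Q\,\a(\theta) \|_2 } .
\]
Using the identity $\tilde\Q\,\a(\theta) = \Q\,\a(\theta) + (\U_K\U_K^H - \tilde\U_K\tilde\U_K^H)\,\a(\theta)$ and the reverse triangle inequality I would get
\[
\sqrt{\tfrac{ P_{\textrm{music}}(\theta) }{ \tilde P_{\textrm{music}}(\theta) }}
\: \geq \: 1 \, - \, \frac{ \| (\U_K\U_K^H - \tilde\U_K\tilde\U_K^H)\,\a(\theta) \|_2 }{ \| \Q\,\a(\theta) \|_2 } .
\]
Since $\| \a(\theta) \|_2^2 = M$, the numerator is at most $\sqrt{M}\,\| \U_K\U_K^H - \tilde\U_K\tilde\U_K^H \|_2$; and at every grid angle that is not essentially on top of a source one has $\| \Q\,\a(\theta) \|_2^2 = \a(\theta)^H\Q\,\a(\theta) \geq 1$ (at the exceptional angles $P_{\textrm{music}}(\theta)$ blows up and the asserted bound is vacuous). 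So the whole problem reduces to controlling the spectral distance $\| \U_K\U_K^H - \tilde\U_K\tilde\U_K^H \|_2$ between the leading rank‑$K$ eigenspace of $\S$ and that of $\tilde\S = \C\W\C^H$ — precisely the estimate that also drives Theorem~\ref{thm:power}, so I would establish it once and invoke it in both places.

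To bound that spectral distance I would write the Gaussian test matrix in the eigenbasis of $\S$, $\Omega_1 \triangleq \U_K^H\Pii \in \CB^{K\times p}$ and $\Omega_2 \triangleq \U_{-K}^H\Pii$, and combine three facts. (i) $\range(\C) = \range(\S\V^{(t)}) = \range(\S^{t+1}\Pii)$, and $\tilde\S = \S^{1/2}\PP\,\S^{1/2}$ for the orthogonal projector $\PP$ onto $\range(\S^{1/2}\V^{(t)})$, so $\tilde\S \preceq \S$, $\sigma_{K+1}(\tilde\S) \leq \sigma_{K+1}(\S)$, and $\range(\tilde\U_K) \subseteq \range(\C)$. (ii) The canonical angles between $\range(\S^{t}\Pii)$ and $\range(\U_K)$ have tangent at most $\| \Si_{-K}^{t}\,\Omega_2\,(\Si_K^{t}\,\Omega_1)^\dag \|_2 \leq \big(\sigma_{K+1}(\S)/\sigma_K(\S)\big)^{t}\,\| \Omega_2\Omega_1^\dag \|_2$, using that $\Omega_1$ has full row rank almost surely when $p \geq K$; extracting the leading rank‑$K$ eigenspace $\tilde\U_K$ of the compressed matrix $\tilde\S$ (a Rayleigh–Ritz step on $\range(\C)$) contributes one further factor of $\sigma_{K+1}(\S)/\sigma_K(\S)$ once one uses the spectral gap $\sigma_K(\S) > \sigma_{K+1}(\S)$ and the interlacing of (i), for a net exponent $t+1$. (iii) Standard Gaussian‑matrix tail estimates give $\| \Omega_2\Omega_1^\dag \|_2 \lesssim \sqrt{K(M-K)}/\delta \leq \sqrt{MK}/\delta$ with probability at least $1-\delta$. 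Assembling (i)–(iii) yields $\| \U_K\U_K^H - \tilde\U_K\tilde\U_K^H \|_2 \leq \tfrac{\sqrt{MK}}{\delta}\big(\sigma_{K+1}(\S)/\sigma_K(\S)\big)^{t+1}$, hence $\| (\U_K\U_K^H - \tilde\U_K\tilde\U_K^H)\,\a(\theta) \|_2 \leq \tfrac{\sqrt{M^2K}}{\delta}\big(\sigma_{K+1}(\S)/\sigma_K(\S)\big)^{t+1}$; substituting into the second display above finishes the proof.

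The main obstacle is the Rayleigh–Ritz step of ingredient (ii): quantifying that the leading \emph{eigenvectors} of the compressed matrix $\tilde\S$, not merely its range, align with $\U_K$, and doing so in a way that harvests exactly one extra factor $\sigma_{K+1}(\S)/\sigma_K(\S)$ so the final exponent is $t+1$ rather than $t$. The subtlety is that the usable eigengap is $\sigma_K(\S) - \sigma_{K+1}(\tilde\S)$, which one must first lower‑bound by $\sigma_K(\S) - \sigma_{K+1}(\S)$ via $\tilde\S \preceq \S$, and that the Gaussian prefactor $\| \Omega_2\Omega_1^\dag \|_2$ must not be double‑counted across the $t$ power steps and the compression. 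Everything else — the reduction to the ratio of projector quadratic forms, the reverse triangle inequality, and the Gaussian tail bounds — is routine, and once the spectral‑distance estimate is in hand, obtaining the lower bound of Theorem~\ref{thm:power2} in place of the upper bound of Theorem~\ref{thm:power} costs nothing more than the sign change in the reverse triangle inequality.
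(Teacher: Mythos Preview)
Your reduction is correct and essentially mirrors the paper's: both arguments arrive at
\[
\sqrt{\tfrac{P_{\textrm{music}}}{\tilde P_{\textrm{music}}}}
\;\geq\; 1 - \tfrac{\|\a\|_2}{\|(\I-\U_K\U_K^H)\a\|_2}\,\big\|\U_K\U_K^H-\tilde\U\tilde\U^H\big\|_2 ,
\]
use $\|\a\|_2=\sqrt{M}$ and $\|(\I-\U_K\U_K^H)\a\|_2\ge 1$, and finish by bounding the projector distance via the Gaussian estimates $\|\Pii\|_2\lesssim\sqrt{M}$ and $\sigma_K(\U_K^H\Pii)\ge \delta/\sqrt{K}$. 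The paper's version of the reverse triangle inequality is the splitting $(\I-\U_K\U_K^H)\a=(\I-\U_K\U_K^H)\tilde\U\tilde\U^H\a+(\I-\U_K\U_K^H)(\I-\tilde\U\tilde\U^H)\a$, together with the identity $\|(\I-\tilde\U\tilde\U^H)\U_K\|_2=\|(\I-\U_K\U_K^H)\tilde\U\|_2$ for equal-dimensional subspaces; this is equivalent to what you wrote.

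The one place you make life harder than necessary is your ``main obstacle,'' the Rayleigh--Ritz step. In the setting of Theorems~\ref{thm:power} and~\ref{thm:power2} the sketch width is $p=K$ (Algorithm~\ref{alg:power} draws $\Pii\in\CB^{M\times K}$, and the proof explicitly uses that $\U_K$ and $\tilde\U$ are both $M\times K$). Consequently $\tilde\S=\C\W\C^H$ has rank exactly $K$ and $\range(\tilde\U)=\range(\C)=\range(\S\V^{(t)})=\range(\S^{t+1}\Pii)$ \emph{on the nose}, not merely as a containment. So there is no compression step to analyze: the tangent-of-canonical-angle bound applied directly to $\range(\S^{t+1}\Pii)$ already gives the exponent $t+1$, with no need to invoke $\tilde\S\preceq\S$ or an eigengap argument. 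The paper obtains this same conclusion by an explicit construction: since $\tilde\U\tilde\U^H\S^{t+1}\Pii=\S^{t+1}\Pii$, one can take $\tilde\X=\tilde\U^H\S^{t+1}\Pii(\U_K^H\Pii)^\dag\Si_K^{-t-1}\U_K^H$ and read off
\[
\big\|(\I-\tilde\U\tilde\U^H)\U_K\big\|_2
\;\le\;\big\|(\S^{t+1}-\S_K^{t+1})\Pii(\U_K^H\Pii)^\dag\Si_K^{-t-1}\big\|_2
\;\le\;\Big(\tfrac{\sigma_{K+1}(\S)}{\sigma_K(\S)}\Big)^{t+1}\|\Pii\|_2\,\|(\U_K^H\Pii)^\dag\|_2 .
\]
Dropping the Rayleigh--Ritz paragraph and replacing $\range(\S^{t}\Pii)$ by $\range(\S^{t+1}\Pii)$ in your step~(ii) makes your argument line up with the paper's and removes the obstacle entirely.
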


Theorem~\ref{thm:power2} establishes an upper bound for $\tilde{P}_{\textrm{music}} (\theta)$, which is also estimated via the random projection.
Similarly, the relative approximation error is basically
\begin{equation*}
\OM \left(  \sqrt{M^2 K}  \cdot
\big( \tfrac{ \sigma_{K+1} (\S ) }{ \sigma_{K} (\S ) } \big)^{t+1} \right) .
\end{equation*}
It also converges to zero exponentially.
Again, even if the SNR is not very high, few iterations suffice for the high precision.
The proof of Theorem~\ref{thm:power2} is structured into Appendix E.

\subsection{Accurate DoA Estimation}

As shown latterly, in the context of massive-MIMO radars, the target peaks of pseudo-spectrum, which are located exactly at target AoAs, are significantly higher than noise baseline,
\begin{equation*}
P_{\textrm{music}} (\theta_k) \gg P_0,
\end{equation*}
where $P_0\triangleq \max\left\{P_{\textrm{music}} (\theta')\right\}$ denotes the maximum amplitude of non-target regions, i.e. $\theta' \notin \pmb{\theta}_{K \times 1}$.
Without loss of generality, we assume it almost satisfies $P_{\textrm{music}} (\theta_k) / P_0 > \gamma $, when the SNR is not very small (e.g. $\gamma\thicksim M$; see the following Figure 3).
If the approximated pseudo-spectrum misses one target peak located at $\theta_k$, we should have
\begin{equation}
\sqrt{\tilde{P}_{\textrm{music}} (\theta_k)} \leq \sqrt{P_0}.
\end{equation}
%where $0<\alpha<1$ is one scaling factor to exclude the false peaks.
At the same time, according to our theoretical lower bounds in eq. (13), we are supposed to have
\begin{equation*}
\tfrac{1}{1 + \alpha_l} \sqrt{ P_{\textrm{music}} (\theta_k)  } \leq \sqrt{ \tilde{P}_{\textrm{music}} (\theta_k) } \leq  \sqrt{P_0},
\end{equation*}
and therefore,
\begin{equation}
\sqrt{\gamma} \leq \sqrt{ P_{\textrm{music}} (\theta_k)/ P_0  } \leq  1 + \alpha_l .
\end{equation}

On the other hand, according to the following numerical analysis (e.g. in Figure 5), we would have $1 + \alpha_l \rightarrow  1$ and
\begin{equation}
1 + \alpha_l < \sqrt{\gamma}.
\end{equation}

By checking eq. (18) and (19), we immediately conclude this contradiction result indicates eq. (17) should be invalid.
I.e., the approximated pseudo-spectrum $\tilde{P}_{\textrm{music}} (\theta) $ should not miss the true peaks of targets.
Similarly, from the upper bound in eq. (14), we show the approximated pseudo-spectrum $\tilde{P}_{\textrm{music}} (\theta) $ should not mistake non-target AoA as one target peak, thus preventing the false alarm in automotive sensing.

\section{Numerical Simulation \& Analysis}
In the section, we examine our fast-MUSIC methods via numerical simulations; and compare them with popular subspace algorithms in high-resolution massive MIMO radars.

\subsection{Experiment Settings}

Since we focus on the subspace computation as well as subsequent AoA estimation, we directly start from a signal matrix $\Y$ of $M$ receiving elements.
In the following simulations, we configure the operation frequency to be 77GHz, as for mm-wave MIMO radar systems.
The IF bandwidth is assumed to be 80MHz.
The element spacing distance is equal to the half of wavelength.
The time complexity can be fairly measured by the CPU runtime (CPU 2.59GHz, 32GB memory), which is proportional to the required number of multiplication flops.
The averaged runtime is reported based on 100 independent realizations.

\begin{figure}[!t]\vspace{0pt}
	\begin{center}
		\subfigure[Pseudo-spectrum of the Nystr\"om method with different $p$.]{
			\label{fig:config:1}
			\includegraphics[width=64mm]{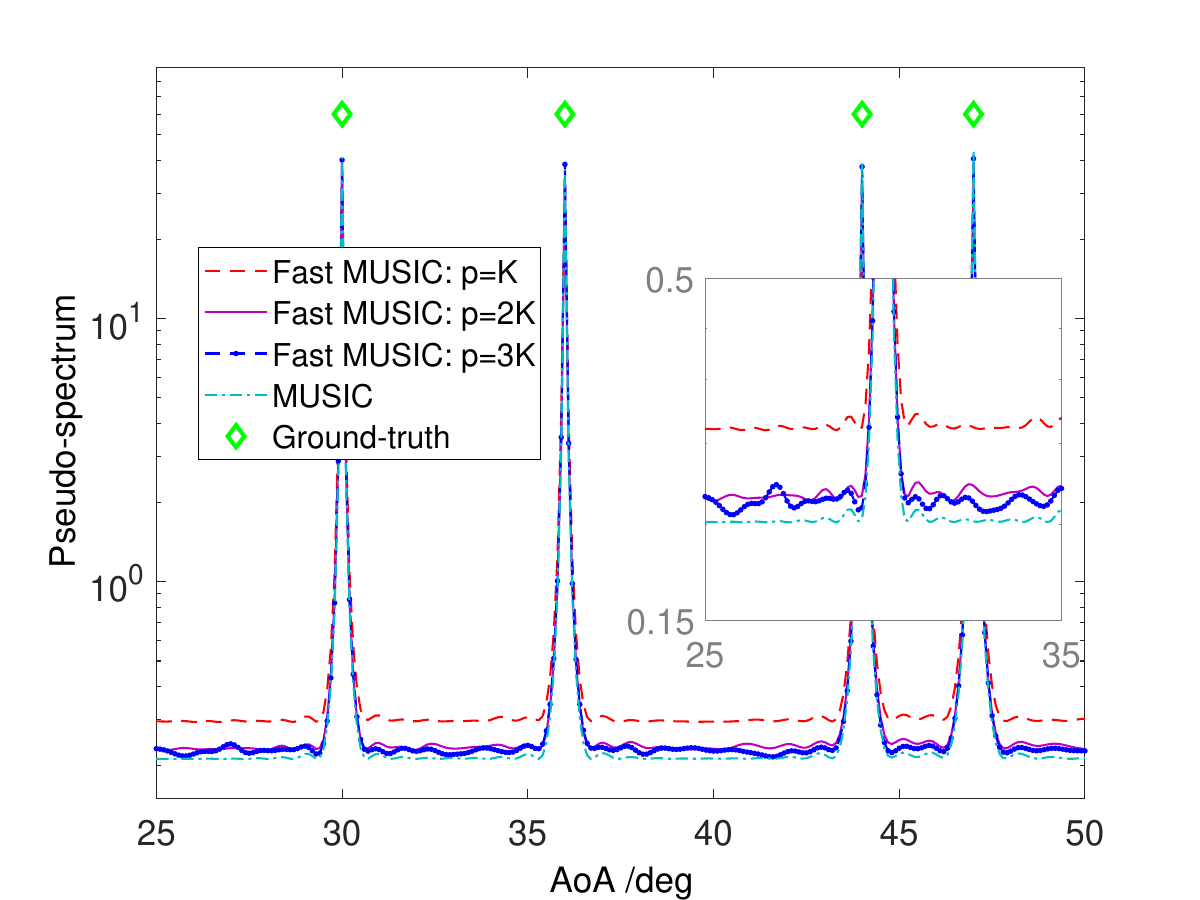}}
		\subfigure[Pseudo-spectrum of the power method with different $t$.]{
			\label{fig:config:2}
			\includegraphics[width=64mm]{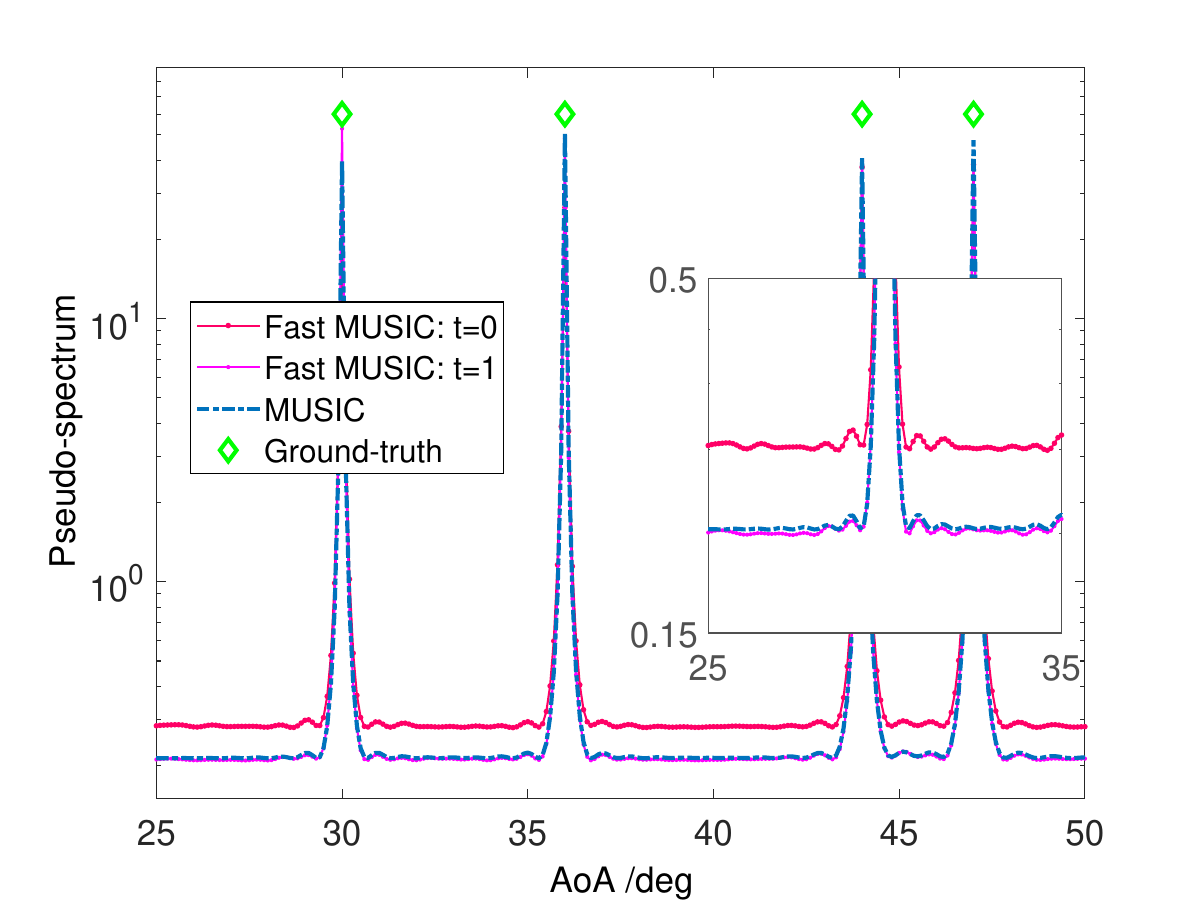}}
		\caption{We plot the pseudo-spectra of standard MUSIC and the fast-MUSIC methods with different tuning parameters.}
		\label{fig:config}
	\end{center}
\end{figure}

\subsection{Tuning Parameters}

Our fast-MUSIC method 1 (Algorithm~\ref{alg:nystrom}) involves one key user-specific parameter $p$ ($p \geq K$), which is used for trading off the accuracy and the complexity.
I.e., a large $p$ is necessary to attain the tighter error bound (as in Theorem 1), which in turns increases the time complexity.
Likewise, the fast-MUSIC method 2 (Algorithm~\ref{alg:power}) involves a tuning parameter $t$ ($\geq 1$).
In the following, we discuss their practical settings, whereby we fix $M=200$, $N=2M$, $K=4$, and SNR $= 0$ dB.

In Figure~\ref{fig:config:1}, we plot the approximated pseudo-spectrum of our fast MUSIC method 1 under different $p$.
We observed that, under three different settings ($p=K$, $2K$, and $3K$), the fast algorithm successfully obtains the accurate AoA estimation of $K$ targets (e.g. with overlapped peaks).
Define the approximation error relative to standard pseudo-spectrum as $ \sum_{l=0}^{L-1} \big[ \tilde{P}_{\textrm{music}} (\theta_l) - {P}_{\textrm{music}} (\theta_l) \big]^2$.
When $p$ is increased from $K$ to $2K$, the error drops from $0.59$ to $0.11$;
while $p$ is further increased to $3K$, the approximation error would be marginal.

In Figure~\ref{fig:config:2}, we show the approximated pseudo-spectrum of our fast method 2 under different settings of $t$; here we assume $P=K$.
When $t=2$ the approximation error is $0.03$, which is sufficiently small for the accurate AoA estimation.
In practical automotive sensing, we may directly set $t=2$.

\subsection{Complexity vs Accuracy}

We now evaluate the runtime of various popular subspace methods.
Figure~\ref{fig:runtime} shows the computation latency (seconds) against the number of antenna elements, i.e. $M$.
We assume $N=M$ and $K=10$ in the numerical analysis.
For our fast-MUSIC method 1, we assume $p= 12$ (i.e. $p=\lceil1.2 \times K \rceil $);
and in fast-MUSIC method 2, we have $t=2$ and $p= 12$. %({\color{blue}note that the practical configuration of such two key parameters would be discussed lately in Section IV-E}).

Despite the near-optimal accuracy in high-resolution automotive sensing, the standard MUSIC has $\OM (M^3)$ time complexity, due to the computational SVD on large covariance.
As $M$ increase, it easily becomes impractical for real-time automotive applications, by producing an intolerable latency.
When $M=1000$, the latency of SVD itself consumes more than 500 milliseconds (ms), which is formidable to real-time sensing of unmanned aircrafts/vehicles;
similar problems persist in matrix inverse and Lanczos methods.
Here, we do not compare the ESPRIT method, as it relies on SVD and has exactly the same runtime as MUSIC in subspace computation.

\begin{figure}[!t]
	\centering
	\includegraphics[width=6.3cm]{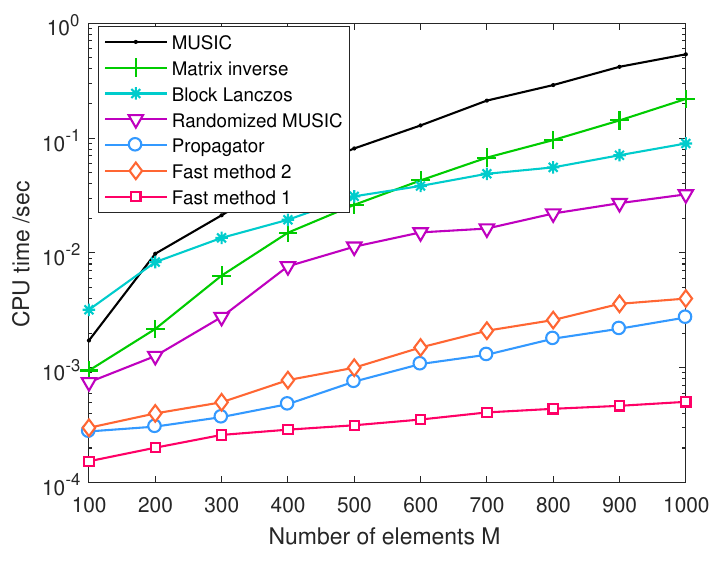}
	\caption{Wall-clock runtime (second) against the number of antennas, $M$.
		We calculate only the runtime for computing the signal subspace which is the major bottleneck in automotive radar.
		Here we set $N=M$ and SNR $=0$~dB.}
	\label{fig:runtime}
\vspace{-3mm}
\end{figure}

%Since the matrix inverse is somewhat more efficient than SVD, the subspace approximation method in \cite{oh2015low} would be faster than MUSIC.
%But, it still requires a high complexity $\mathcal{O}\{M^3\}$, which may be unsuitable for real-time target estimation with massive-MIMO radars.
%In contrast, the well-known Lanczos method would reduce the time complexity to {\color{blue} $\mathcal{O}\{p_M KM^2\}$ ($p_M \sim \mathcal{O}(\log M)$)}, which is more preferable to a standard MUSIC and a matrix-inverse method in large $M$.
%As far as subspace methods are concerned, the Propagator algorithm is highly efficient for massive-MIMO radars.
%However, as shown in later simulation, it would produce a less attractive accuracy.

As shown in Figure~\ref{fig:runtime}, our fast-MUSIC method 1 is the most efficient, with its time complexity linearly scalable with $M$, i.e. $\OM (p^2 M )$.
For a large array $M=1000$, its required computation latency is only 0.5 ms; while MUSIC consumes 500 ms. So, it would be $1000$ times faster than the standard MUSIC.
It is noteworthy that, although our fast method 2 theoretically has a time complexity $\OM (tp M^2)$, its time cost, as discussed, is dramatically lower than the block-Lanczos algorithm and the other Randomized MUSIC \cite{Li2020Fast}.
As discussed, the iteration number in fast method 2 is set to $2$, while $p\sim \mathcal{O}(K)$.
As validated by Figure~\ref{fig:runtime}, its time complexity is much lower than the Lanczos method, whereby the parameter $p_M$ grows as $\OM(\log M)$ (e.g. $p_M=7$ for $M=1024$).
Its processing latency is only 4 ms when $M=1000$, only slightly higher than the Propagator method (2.7 ms).
As such, our fast method 2 is also applicable to real-time automotive sensing.

\subsection{Error Bound vs Actual Error}\label{sec:exp:bounds}

%\red{A few corrections of Figure~\ref{fig:bound}.
%	(1) $\tilde{P}_{fast}$ to $\tilde{P}_{music}$.
%	(2) $\kappa_h$ to $\kappa_u$.
%	(3) Make the font size of ticks, labels, and legends much bigger (as large as possible so they will look clear if the review prints it).  }

\begin{figure}[!t]\vspace{0pt}
	\begin{center}
		\subfigure[ ] {
			\label{fig:bound:1}
			\includegraphics[width=64mm]{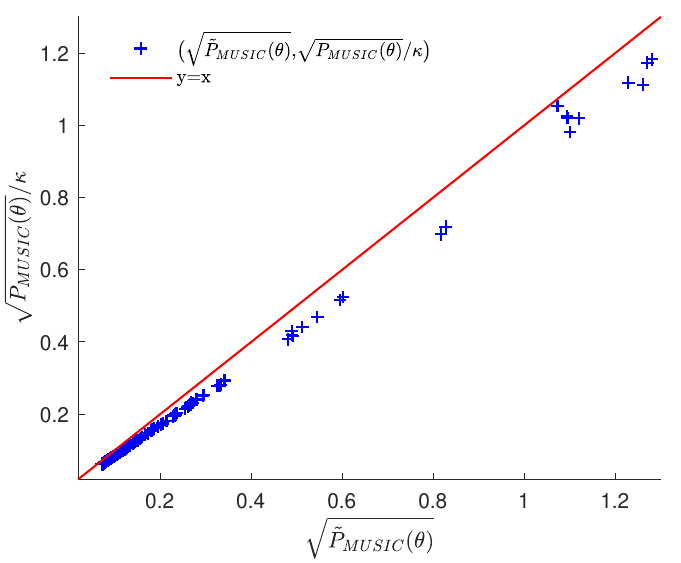}}
		\subfigure[ ] {
			\label{fig:bound:2}
			\includegraphics[width=64mm]{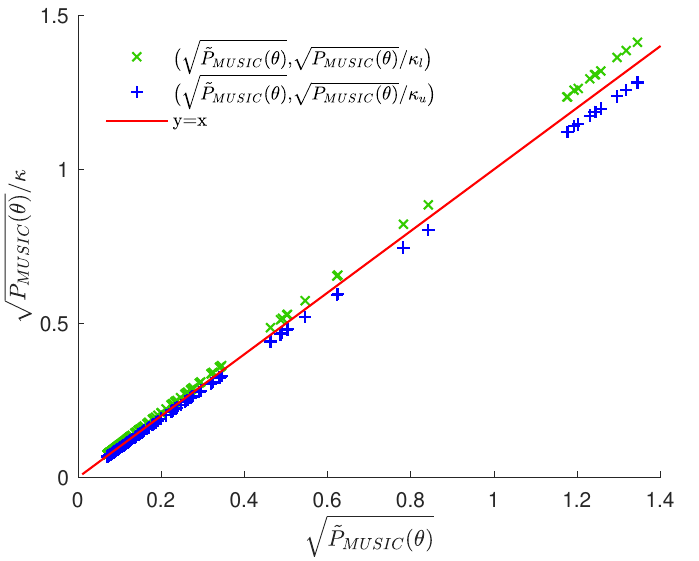}}
		\caption{
Comparing the theoretical bounds (Theorems~\ref{thm:nys}, \ref{thm:power}, and \ref{thm:power2}) with the numerical results. (a) Theoretical bound of fast method 1. (b) Theoretical bounds of fast method 2, \emph{green cross} -- upper bound, \emph{blue cross} -- lower bound. The constants $\kappa$, $\kappa_l$, and $\kappa_u$ are defined in Section~\ref{sec:exp:bounds}.}

		\label{fig:bound}
	\end{center}
\vspace{-4mm}
\end{figure}

We then study the accuracy of our theoretical error bounds (Theorems~\ref{thm:nys}, \ref{thm:power}, and \ref{thm:power2}), by comparing the pseudo-spectrum ${\tilde{P}}_{\text{music}}(\theta)$ approximated by our fast-MUSIC with the exact one ${P}_{\text{music}}(\theta)$.
We will show how much the theoretical bounds under-estimate or over-estimate an exact ${P}_{\text{music}}(\theta)$.
In this simulation, we have $M=N=400$, $p=K=12$ and $t=3$.

\subsubsection{\textbf{Fast MUSIC -- Method 1}}
We first study the lower bound for our fast-MUSIC method 1, which is premised on random sampling.
We denote the righthand side of the bound in Theorem~\ref{thm:nys} by $\kappa = 1 + \alpha_l$.
Thus, we have
\begin{equation} \label{eq:nys_lower}
\sqrt{\tilde{P}_{\textrm{music}}(\theta)} \geqslant \sqrt{P_{\textrm{music}}(\theta)} /\kappa.
\end{equation}
We repeat our numerical simulations to calculate a set of $\tilde{P}_{\textrm{music}}(\theta)$ and ${P}_{\textrm{music}}(\theta)$.
In the simulation, we assume $K=9$, $M=200$, $N=2M$ and SNR=1dB.
We set $p=10K\text{log}(K/\delta)$ ($\delta=0.01$) when deriving the theoretical lower bound.
Interestingly, our fast method 1 attains the accurate pseudo-spectrum, even if the user-specific parameter is $p \thicksim \mathcal{O}(K)$.
In this analysis, $K$ unknown DoAs are randomly generated in the angle range [0,~80] deg.

In Figure~\ref{fig:bound:1}, we plot $\sqrt{P_{\textrm{music}}(\theta)} /\kappa$ (the $y$-axis) against $\sqrt{\tilde{P}_{\textrm{music}}(\theta)}$ (the $x$-axis) as the blue crosses.
Ideally, if the lower bound is tight ($\alpha_l \rightarrow 0$, $\kappa_l \rightarrow 1$), then the blue crosses should fall on the line $y=x$.
Since the lower bound (righthand side of eq. \eqref{eq:nys_lower}) would over-estimate $\sqrt{\tilde{P}_{\textrm{music}}(\theta)}$, the blue crosses can fall below the line $y=x$ with the probability approaching 1.
That means, the empirical results match the derived theoretical result, i.e. no blue cross is above $y=x$.

If $\theta'$ falls in the non-target region ($\theta'\notin \pmb{\theta}_{K \times 1}$), the normalized pseudo-spectrum is very small, e.g. ${P}_{\textrm{music}}(\theta')\thicksim \mathcal{O}(1/M)$.
Figure~\ref{fig:bound:1} shows that our lower bound is very tight in the non-target region.
As seen, the blue crosses are almost on the line $y=x$ and hence $\alpha_l\rightarrow 0$, e.g. ${P}_{\textrm{music}}(\theta')<0.2$.
Following this, the approximated pseudo-spectrum $\tilde{P}_{\textrm{music}}(\theta')$ should not interpret non-target peaks as targets.

If $\theta_k$ is in the target region ($\theta_k \in \pmb{\theta}_{K \times 1}$), the normalized pseudo-spectrum is relatively large, ${P}_{\textrm{music}}(\theta_k)\thicksim 1$.
Figure~\ref{fig:bound:1} shows that in the target region, the empirical results (blue cross) deviate slightly from the lower bound, indicating our lower bound is less tight in this region (e.g. ${P}_{\textrm{music}}(\theta_k) \thicksim 1$).
However, this would be fine for realistic applications.
I.e. $\tilde{P}_{\textrm{music}}(\theta_k)$ for target AoAs is sufficiently large to be detected via its peak amplitude, i.e. $1+\alpha_l \ll \sqrt{\gamma}$.

\begin{figure}[!t]
	\centering
	\includegraphics[width=7cm]{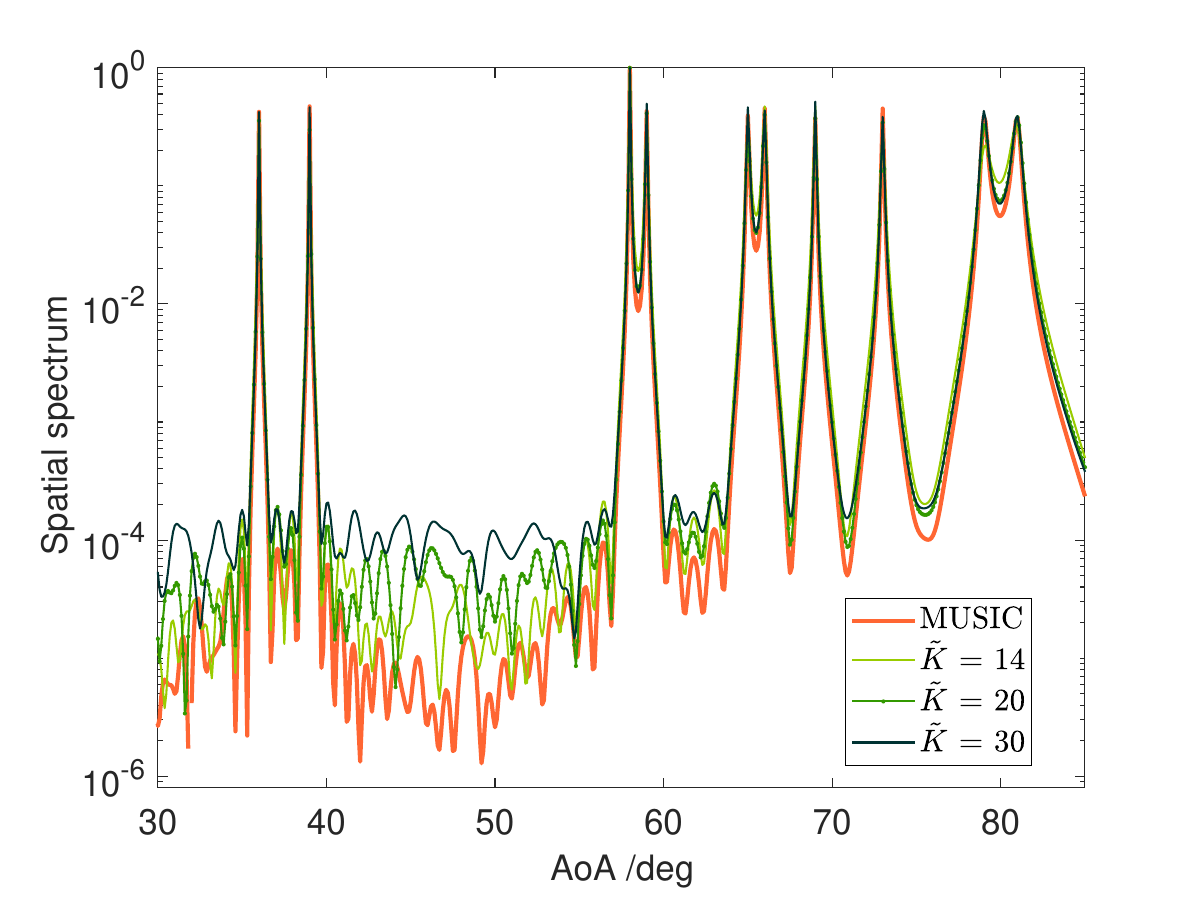}
	\caption{Robustness of our fast MUSIC algorithm to inexact guess of $K$.
		Here, the actual number of targets is $K=10$.
		The red curve plots MUSIC with the exact $K$.
		The other four curves plot our fast MUSIC algorithm with different guesses of $K$.}
	\label{fig:K}
\end{figure}

\subsubsection{\textbf{Fast MUSIC -- Method 2}}
We study the theoretical error bounds of our fast method 2, as in Theorems~\ref{thm:power} and \ref{thm:power2}.
We denote the righthand side of the lower bound in Theorem~\ref{thm:power} by $\kappa_l = 1 + \alpha_l$;
and the righthand side of the upper bound in Theorem~\ref{thm:power2} by $\kappa_u = 1 + \alpha_u$.
Then, we expect
\begin{align*}
\sqrt{\tilde{P}_{\textrm{music}}(\theta)} \geq \sqrt{P_{\textrm{music}}(\theta)} /\kappa_l ,~~
\sqrt{\tilde{P}_{\textrm{music}}(\theta)} \leq \sqrt{P_{\textrm{music}}(\theta)} /\kappa_u .
\end{align*}

In the numerical analysis, we assume $K=9$, $M=200$, $N=2M$, $\delta=0.1$, SNR=1dB, and $t=2$.
In Figure~\ref{fig:bound:2}, we show $\sqrt{P_{\textrm{music}}(\theta)} /\kappa_l$ against $\sqrt{\tilde{P}_{\textrm{music}}(\theta)}$ as blue crosses,
and $\sqrt{P_{\textrm{music}}(\theta)} /\kappa_u$ against $\sqrt{\tilde{P}_{\textrm{music}}(\theta)}$ as green crosses.
As shown, the blue crosses and green crosses lie almost on the line $y=x$, indicating that our theoretical lower and upper bounds are very tight, i.e. $\alpha_l\rightarrow 0$ and $\alpha_u \rightarrow 0$.
Combined with the theoretical results in Section IV-E, we conclude there will be no false/missed peaks in the approximated pseudo-spectrum.

\subsection{Robustness to Inaccurate $K$}
Similar to standard MUSIC or ESPRIT, our fast-MUSIC requires the prior knowledge of the number of targets, $K$.
Unfortunately, in practice $K$ would become unknown.
Thus, we have to use an over-estimate of $K$ as the input.

In Figure~\ref{fig:K}, we study the effects on the approximated pseudo-spectrum from an inexact guess of $K$.
In this simulation, we assume $M=200$, $N=400$, $K=10$, $\textrm{SNR}=0$ dB and $p=\lceil1.2\times \tilde{K}\rceil$.
Here, the standard MUSIC is assumed with the exact knowledge, i.e. $K=10$.
As in Figure~\ref{fig:K}, our fast MUSIC method 1 is practically robust to an over-estimate of $K$.
In other words, once the guess value $\tilde{K}$ is larger than $K$, our fast-MUSIC produces almost the same pseudo-spectrum (e.g. with the overlapped peaks).

Another more promising solution to combat this prior uncertainty of target numbers is to apply the \emph{pre-estimation} scheme \cite{2020MIMO}, by directly acquiring \emph{a priori} estimation of $K$.
For example, both the Akaike information criterion (AIC) \cite{Akaike1974A} or minimum description length (MDL) \cite{Barron1998The} could be used to estimate the number of targets $K$.
On this basis, we may configure the sampling length $p$ more reliably, which introduces a slight increase on time complexity but excludes the potential risk of underestimating $K$.

\begin{figure}[!t]\vspace{0pt}
	\begin{center}
		\subfigure[$M=200$, $N=800$, and SNR $= 0$ dB.]{
			\label{fig:normalized:1}
			\includegraphics[width=65mm]{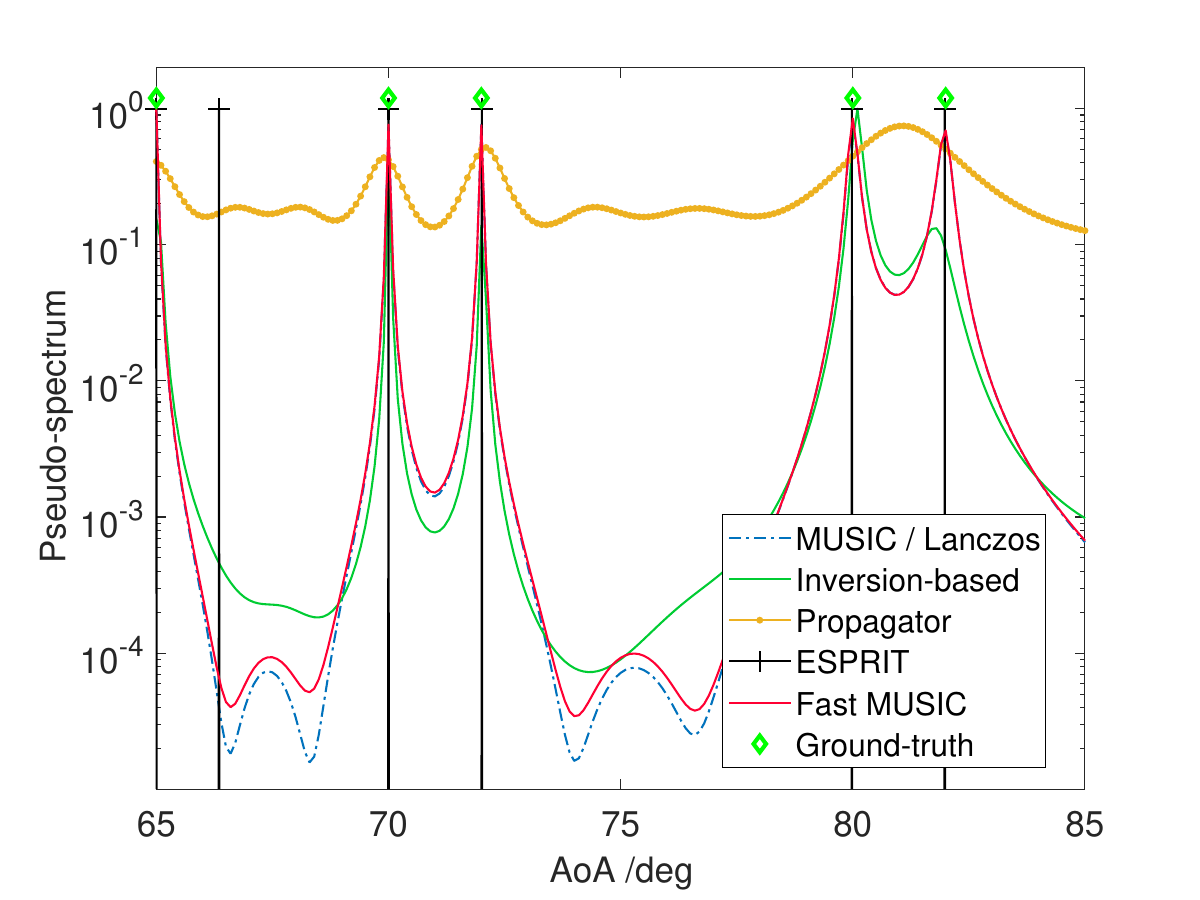}}
		\subfigure[$M=N=200$ and SNR $= 0$ dB.]{
			\label{fig:normalized:2}
			\includegraphics[width=65mm]{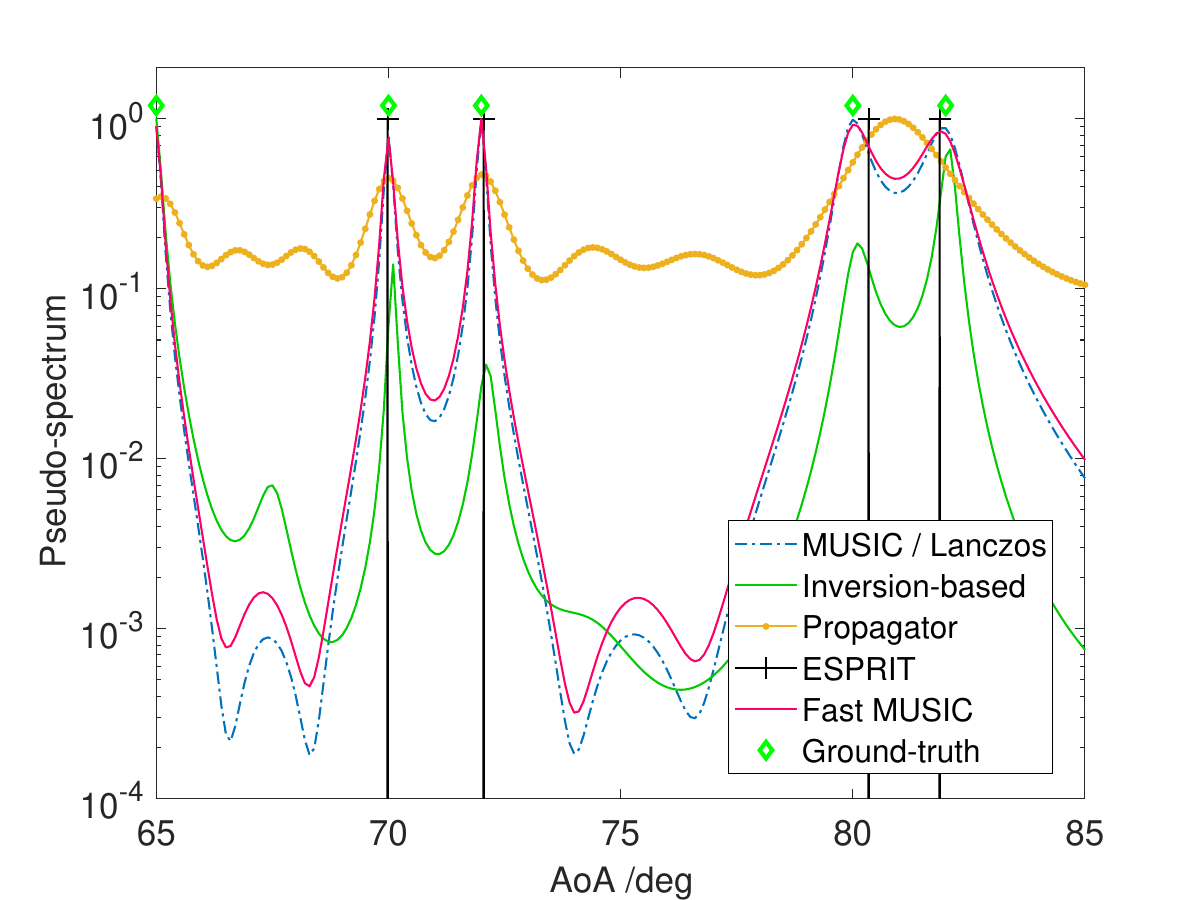}}
		\caption{Estimated pseudo-spectrum $\check{P}(\theta)$ of various algorithms.
			Here, the proposed algorithm refers to the fast-MUSIC method with $p=12$.}
		\label{fig:normalized}
	\end{center}
\end{figure}

\subsection{Accuracy in AoA Estimation}

We now evaluate the AoA estimation accuracy and compare fast-MUSIC (e.g. Fast Method 1) with the existing subspace algorithms.
In Figure~\ref{fig:normalized}, the normalized pseudo-spectrum is
\begin{equation*}
\check{P}(\theta)
\: = \: \tfrac{P(\theta)- \min_{\theta} \{ P(\theta) \} }{\max_{\theta} \{ P(\theta) \}  - \min_{\theta} \{ P(\theta) \} }
\end{equation*}
where $P (\theta)$ can be the spectrum of MUSIC, ESPRIT, Propagator, etc.
In the numerical simulation, we assume $M=200$ and $K=9$ and then evaluate the relative accuracy under different $N$.
Here, we focus on our fast method 1 with $p=12$.

As illustrated by Figure~\ref{fig:normalized} only showing AoAs that fall into [65 86] deg), among all counterpart methods, MUSIC and block-Lanczos could attain the near-optimal pseudo-spectrum, by exploiting the exact subspace information, which thus acquire the highly accurate AoAs estimation.
Our fast-MUSIC obtains an approximated pseudo-spectrum that closely tracks the standard MUSIC.
Moreover, the pseudo-spectrum of both MUSIC and our fast-MUSIC will not change, when the snapshot number $N$ drops from $N=800$ to $N=200$.
In comparison, another ESPRIT method attains less accurate AoAs.
The estimation error (i.e. false targets nearby 65 degrees) seems to be inevitable, especially when the exact number of sources (i.e. $K$) remains unknown.

Despite its low complexity, the Propagator method produces the less accurate pseudo-spectrum.
For one thing, the fluctuated baseline is relatively high, which may produce false alarms in low SNR cases.
For another, it gets the less accurate estimation when the target DoAs surpass 70 deg.
For example, it may misinterpret two targets (between 80 and 85 degrees) as one fake object, resulting in the significant estimation error.
Note that, a modified Propagator method was proposed to improve the accuracy \cite{Tayem2005L}, which, however, requires one special array structure (e.g. $L$-shape array) and tends to be less attractive for compact automotive MIMO radars.

%The matrix-inverse based algorithm's target detection performance looks vary fine, although slightly worse than MUSIC and our proposed algorithm.
%It requires a large $N$ (or high SNR) in order to make the estimation accurate.
%In Figure~\ref{fig:normalized}, when $N$ drops from 800 to 200, the inversion algorithm's pseudo-spectrum around 60 degrees and 80 degrees looks problematic.
%%Another shortcoming is its incapability of handling low SNR data, as previously discussed in Section~\ref{sec:prior_works}.
%For this matrix-inversion method, although its complexity in acquiring signal sub-space can be reduced to some extents, it still has one drawback. That is, its performance can be only guaranteed in the condition of high SNRs, so that the inverse approximated noise subspace is sufficiently accurate (as previously discussed in Section~\ref{sec:prior_works}). I.e., in order to accurately estimate AoAs, the required sample length $N$ is relatively large. In the case of low SNR or less sample length $N \leqslant M$, its performance would be unstable, e.g. producing false/missed objects or AoA estimation error.

For another matrix-inversion method, even through the time complexity in approximating noise subspace is reduced to some extent, it still has drawbacks in realistic applications.
Its estimation accuracy may be guaranteed only in high SNRs, subject to the sufficiently accurate approximation of subspace.
In the case of $N \leq M$ or strong noise ($\sigma_K(\S) \simeq \sigma_n^2$), it may be unstable and produces false objects or inaccurate AoAs.

\begin{figure}[!t]
	\centering
	\includegraphics[width=6.4cm]{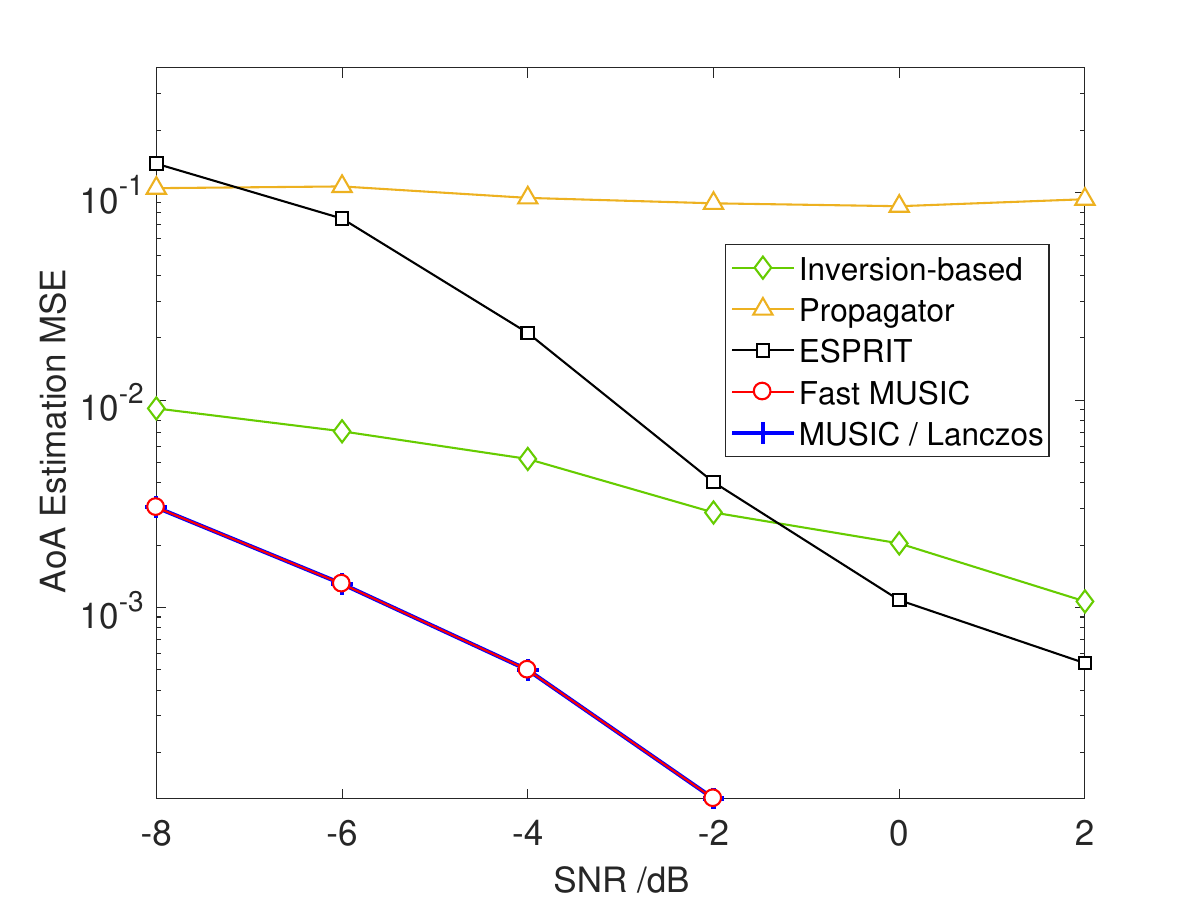}
	\caption{Plot of AoA estimation MSE against SNRs. }
	\label{fig:snr}
\end{figure}

Except for the linear complexity and real-time implementation, our fast-MUSIC is highly accurate and stable.
Unlike its counterparts that have to scarify accuracy for complexity, our new method resolves the long-standing dilemma between high resolution and low complexity in MIMO radar signal processing.
It thus achieves the high-resolution AoA estimation at a scalable complexity, which provides also the great potential to massive MIMO signal processing.

One practical challenge for our fast-MUSIC is that, when multiple targets of different SNRs are taken into accounts, the detection accuracy may be affected slightly.
From Figure 3, we observe the targets with low SNRs may be probably missed, when the number of elements is relatively small.
Fortunately, owing to the substantial processing gain of massive MIMO array and the subspace methods, this baseline fluctuation is extremely low, i.e. $\mathcal{O}(1/M)$ of the maximum peak.
In other words, those low SNR targets can be still detected and their AoAs can be estimated, provided the smallest SNR was larger than $\mathcal{O}(1/M)$ of the maximum SNR.

%\begin{figure*}[!t]
%\centering
%\includegraphics[width=13cm]{Real_experiment.pdf}
%\caption{
%Automotive sensing experiment based on massive MIMO mm-Wave radar. (a) Testing scenario with one car; 4 sub-systems are cascaded to form one massive MIMO mm-Wave radar (each sub-system adopts co-located MIMO techniques and FMCW modulation), with total 192 antenna elements. (b) The cloud point extracted by one-snapshot MUSIC (\emph{Left}) and the FFT method (\emph{Right}). (c) Pseudo-spectrum of subspace method and FFT for a given range. For the FFT method, it treats the highest peak as the real DoA, while considering the other peaks as false ones due to high baseline fluctuations.
%}
%\end{figure*}
\subsection{MSE in AoA Estimation}
Finnaly, we evaluate the AoA estimation error of various methods in the context of massive-MIMO radar.
The mean square error (MSE) of estimated AoAs is measured by:
\begin{align}% , \nonumber
\text{MSE}  & \triangleq \mathbb{E}_{\theta_k \in \mathcal{U}(0,\pi/2)} \left\{\frac{1}{K}\sum\nolimits_{k=1}^{K} \| \theta_k - \hat{\theta}_k \|_2^2\right\}.
\end{align}
Here, $\theta_k$ denotes the ground truth AoA of the $k$th target point, while $\hat{\theta}_k$ denoted its estimation.
In Figure~\ref{fig:snr}, we show the MSE performance of various subspace methods.
We set $M=200$, $K=10$ and $N=220$; our fast method 1 is used with $p=11$.
The AoAs estimation MSE of all subspace methods would be reduced, as SNR increases.
As noted, SNR affects the spectral gap $\tfrac{ \sigma_{K+1} (\S) }{ \sigma_K (\S) }$, and hence the approximation accuracy of the MUSIC pseudo-spectrum.
However, it is noteworthy that the MSE of estimated AoA has a \emph{non-linear} relation with the approximated pseudo-spectrum.
I.e. only when the error of estimated pseudo-spectrum surpasses one threshold, can the attained AoA suffer a serious deviation \cite{1995The,vallet2015performance}.
In this regard, our fast-MUSIC attains the same MSE as standard MUSIC \footnote{Note that, when the number of elements is relatively small (e.g. $M\leq$100), the MSE of our fast-MUSIC will be still comparable to standard MUSIC.}, even with the pseudo-spectrum approximation error.
That is to say, it scarifies no accuracy in estimated AoA, while substantially reducing the time complexity.

\section{Conclusion}

We investigate the high-resolution target detection and AoA estimation problem in the automotive massive-MIMO radar.
Existing subspace methods (e.g. MUSIC and ESPRIT) are computationally expensive, due to the complex decomposition of a large covariance matrix.
Other simplified schemes, e.g. matrix-inverse and propagator methods, lead to the degraded accuracy.
As one long-standing dilemma in MIMO radar signal processing, the high-resolution estimation and the real-time computation remain mutually exclusive.

To enable real-time and high-resolution environmental sensing, we leverage on randomized matrix approximation and develop two fast-MUSIC methods, whereby a large covariance was approximated by three small sketches that are abstracted via random sampling/projection.
We theoretically show that, in high SNRs, our fast-MUSIC is almost as good as the near-optimal MUSIC.
The numerical study also demonstrates that, whilst our method is nearly as accurate as MUSIC in acquiring unknown AoA, it is dramatically faster than standard MUSIC by orders-of-magnitude.
As such, our fast-MUSIC enjoys both the linear complexity and the high accuracy, breaking the theoretical bottleneck in MIMO radar signal processing.
It would provide the great potential to high-resolution and real-time massive-MIMO radars in the emerging automotive applications.
In the future study, we may consider the extension of this new concept of  randomized matrix sketching to the other methods, e.g. subspace methods (e.g. Capton method) and the optimal ML method.

%In this work, we study the super-resolution yet real-time signal processing for massive-MIMO radars in emerging automotive applications. In contrast to its counterparts that inevitably satisfy estimation accuracy to attain low complexity, our proposed scheme is able to acquire the near-optimal performance at a linearly scalable complexity, which permits the real-time sensing in automotive scenarios. This is of great importance to massive-MIMO radars, since it overcomes the long-standing dilemma in radar signal processing, i.e. high-resolution comes with the high computation (and significant processing delay).
%
%Attributed to the new concept of matrix sketch and LRA, the separation of signal sub-space can be implemented with a linear complexity, by focusing alternatively on two factorized low-dimensional abstracts. As shown by theoretical and numerical analysis, the LRA is a promising tool in dealing with massive MIMO radar signals, which fully utilizes the inherent low-rank property in realistic applications and allows for the simple yet effective signal processing. It demonstrated that our proposed scheme significantly surpasses the current candidates, by reducing the processing complexity or delay by tens or even hundreds of times in the case of massive MIMO radars. More importantly, it attains the same MSE as MUSIC, i.e. with not degradation on performance. To this end, our need method indeed paves the new way to high-resolution real-time sensing in emerging automotive applications.
\section*{Acknowledgment}
This work was supported by Natural Science Foundation of China (NSFC) under Grants U1805262.

\section*{Appendix}

Here, we show the detailed proofs of Theorems~\ref{thm:nys}, \ref{thm:power}, and \ref{thm:power2}.
For simplicity, we leave out the $\theta$ in ${P}_{\textrm{music}} (\theta)$,  $\tilde{P}_{\textrm{music}} (\theta)$, and  $\a (\theta)$.

\subsection{Analysis of Uniform Sampling}

An $M\times p$ ($p < M$) matrix $\Pii$ is called a uniform sampling matrix if its columns are sampled from the columns of $\frac{\sqrt{ M}}{\sqrt{p}} \I_M$ uniformly at random.
For any $N\times M$ matrix $\X$, the multiplication $\X \Pii$ ($N\times p$) contains $p$ uniformly sampled columns of $\X$.

\begin{lemma}  \label{lem:uniform1}
	Let $\Pii$ be an $M\times p$ uniform sampling matrix.
	Then
	\begin{equation*}
	\big\| \Pii \big\|_2^2 \: = \: M/p .
	\end{equation*}
\end{lemma}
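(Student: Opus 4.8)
The plan is to read the conclusion directly off the structure of a uniform sampling matrix, so that no probabilistic argument is actually needed — the identity $\|\Pii\|_2^2 = M/p$ will hold for \emph{every} realization of $\Pii$. The key observation is that a uniform sampling matrix is, up to the fixed scalar $\sqrt{M/p}$, a matrix with orthonormal columns.

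Concretely, I would write $\Pii = \sqrt{M/p}\,[\,\e_{i_1},\,\e_{i_2},\,\dots,\,\e_{i_p}\,]$, where $\e_{i_1},\dots,\e_{i_p}$ are standard basis vectors of $\SB^M$ and $i_1,\dots,i_p$ are the sampled indices forming the set $\IM$. Since the $p$ indices are drawn without replacement (consistent with $|\IM| = p$ as used earlier), the columns $\e_{i_1},\dots,\e_{i_p}$ are pairwise orthonormal, so the matrix $\Q \triangleq [\,\e_{i_1},\dots,\e_{i_p}\,]$ satisfies $\Q^H\Q = \I_p$ and hence $\|\Q\|_2 = 1$. Then $\Pii^H\Pii = (M/p)\,\Q^H\Q = (M/p)\,\I_p$, and invoking $\|\Pii\|_2^2 = \sigma_1(\Pii)^2 = \lambda_{\max}(\Pii^H\Pii)$ yields $\|\Pii\|_2^2 = M/p$ at once. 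Equivalently, one may argue $\|\Pii\|_2 = \sqrt{M/p}\,\|\Q\|_2 = \sqrt{M/p}$ directly.

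There is no real obstacle in this lemma; the only subtlety worth stating explicitly is the without-replacement sampling, which is precisely what guarantees the columns of $\Q$ are genuinely orthonormal rather than a multiset of unit vectors — were a column index repeated, the corresponding Gram matrix would acquire an eigenvalue $2M/p$ (or larger) and the claimed equality would fail. With that convention fixed, the computation above is immediate, and the identity will later be used to control the spectral norm of the sketch $\C = \S\Pii$ in the proof of Theorem~\ref{thm:nys}.
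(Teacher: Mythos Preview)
Your argument is correct: writing $\Pii = \sqrt{M/p}\,[\e_{i_1},\dots,\e_{i_p}]$ with distinct indices gives $\Pii^H\Pii = (M/p)\I_p$, and the spectral norm follows immediately. The paper in fact states this lemma without proof, treating it as an immediate consequence of the definition of a uniform sampling matrix; your write-up supplies exactly the natural one-line verification, and your remark about without-replacement sampling (implicit in the paper via $|\IM|=p$) is the only point that needs to be made explicit.
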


\begin{lemma} [\!\cite{wang2016spsd,woodruff2014sketching}]  \label{lem:uniform2}
	Let $\U_K$ be an $M\times K$ matrix with orthonormal columns.
	Let the row coherence of $\U_K$ be defined as $\mu (\U_K)= \: \tfrac{M}{K} \max_{i} \big\| (\U_K)_{i:} \big\|_2^2
\: \in \: \big[1, \tfrac{M}{K} \big]$.
	Let $\Pii$ be an $M\times p$ uniform sampling matrix.
	For
	\begin{equation*}
	p \: \geq \: \frac{(6+2\eta)\mu (\U_K) K  }{3\eta^2} \log \frac{K}{\delta}
	\end{equation*}
	the $K$ singular values of $\U_K^H \Pii \Pii^H \U_K$ are at least $1-\eta$ with probability at least $1-\delta$.
\end{lemma}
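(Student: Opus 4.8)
The plan is to realize $\U_K^H \Pii \Pii^H \U_K$ as a sum of $p$ independent random positive semidefinite (PSD) matrices and then invoke the lower-tail matrix Chernoff inequality. By the definition of a uniform sampling matrix, each column of $\Pii$ equals $\sqrt{M/p}\,\e_{i_j}$ with $i_j$ drawn independently and uniformly from $\{1,\dots,M\}$, so that $\Pii\Pii^H = \tfrac{M}{p}\sum_{j=1}^p \e_{i_j}\e_{i_j}^H$. Setting $\X_j \triangleq \tfrac{M}{p}\,(\U_K)_{i_j:}^H (\U_K)_{i_j:}$, I obtain the decomposition
\[
\U_K^H \Pii \Pii^H \U_K \: = \: \sum_{j=1}^p \X_j ,
\]
into independent rank-one PSD $K\times K$ summands, which is the form required by the concentration machinery.

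First I would compute the mean. Since $i_j$ is uniform, $\EB[\X_j] = \tfrac{1}{p}\sum_{i=1}^M (\U_K)_{i:}^H (\U_K)_{i:} = \tfrac{1}{p}\U_K^H\U_K = \tfrac{1}{p}\I_K$, using orthonormality of the columns of $\U_K$; hence $\EB\big[\sum_{j}\X_j\big] = \I_K$, whose smallest eigenvalue is exactly $1$, the value around which we concentrate. Next I would control the per-term spectral size: $\X_j$ is rank one with nonzero eigenvalue $\tfrac{M}{p}\|(\U_K)_{i_j:}\|_2^2$, and the coherence definition gives $\|(\U_K)_{i:}\|_2^2 \leq \tfrac{K\mu(\U_K)}{M}$ for every $i$, so $\lambda_{\max}(\X_j)\leq R \triangleq \tfrac{K\mu(\U_K)}{p}$ almost surely. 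With these two inputs, the lower-tail matrix Chernoff inequality for sums of independent PSD matrices yields, for $\eta\in(0,1)$,
\[
\PB\Big[\lambda_{\min}\big(\textstyle\sum_j \X_j\big) \leq 1-\eta\Big] \: \leq \: K\Big[\tfrac{e^{-\eta}}{(1-\eta)^{1-\eta}}\Big]^{1/R} .
\]
Because $\U_K^H\Pii\Pii^H\U_K$ is PSD, its $K$ eigenvalues coincide with its $K$ singular values, so a lower bound of $1-\eta$ on $\lambda_{\min}$ is exactly the assertion of the lemma.

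The main obstacle, and the only place the precise constant $\tfrac{6+2\eta}{3\eta^2}$ enters, is converting the tail factor into the stated sampling threshold. Here I would invoke the Bennett-type refinement $\tfrac{e^{-\eta}}{(1-\eta)^{1-\eta}} \leq \exp\!\big(-\tfrac{3\eta^2}{6+2\eta}\big)$ for $\eta\in(0,1)$; note that the cruder bound $e^{-\eta^2/2}$ would produce a weaker constant. Substituting $R=\tfrac{K\mu(\U_K)}{p}$ bounds the failure probability by $K\exp\!\big(-\tfrac{p}{K\mu(\U_K)}\cdot\tfrac{3\eta^2}{6+2\eta}\big)$; requiring this to be at most $\delta$ and solving for $p$ reproduces $p\geq \tfrac{(6+2\eta)\mu(\U_K)K}{3\eta^2}\log\tfrac{K}{\delta}$.

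The genuinely technical step is verifying the scalar tail inequality on $(0,1)$, which I would do by showing $f(\eta) \triangleq -\eta-(1-\eta)\log(1-\eta)+\tfrac{3\eta^2}{6+2\eta} \leq 0$: one checks $f(0)=0$ and that $f'(\eta)\leq 0$ throughout the interval (the two competing $O(\eta^2)$ terms cancel to leading order, so the sign must be tracked carefully). Everything else, the decomposition, the mean computation, and the coherence bound on $R$, is a routine application of the concentration framework, so the argument reduces to combining these ingredients with the matrix Chernoff bound and this one scalar estimate.
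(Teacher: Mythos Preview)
Your proposal is correct and is precisely the standard matrix Chernoff argument underlying this lemma; the paper itself does not prove the statement but cites it from \cite{wang2016spsd,woodruff2014sketching}, where the same decomposition into i.i.d.\ rank-one PSD summands, the coherence bound on $R$, and the scalar refinement $\tfrac{e^{-\eta}}{(1-\eta)^{1-\eta}}\le\exp\!\big(-\tfrac{3\eta^2}{6+2\eta}\big)$ are combined exactly as you outline.
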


\subsection{Analysis of Gaussian Projection}

\begin{lemma} [\cite{vershynin2010introduction}]\label{lem:gauss1}
Let $\Pii $ be an $M \times K$ matrix whose entries are i.i.d.\ drawn from $\NM (0, 1)$.
If $M$ is substantially larger than $K$, the spectral norm of $\Pii$ is bounded by
\begin{equation*}
\big\| \Pii \big\|_2
\: \leq \: \sqrt{M } + \sqrt{K } + \OM (1)  , \qquad \textrm{almost surely}.
\end{equation*}
\end{lemma}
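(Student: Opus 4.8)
The plan is to control $\big\| \Pii \big\|_2$ via the standard route behind the cited bound: first pin down the expectation through a Gaussian comparison inequality, then upgrade to an almost-sure statement through Gaussian concentration. First I would recall the variational formula $\big\| \Pii \big\|_2 = \sup_{\u, \v} \v^H \Pii \u$, the supremum being over unit vectors $\u$ (of length $K$) and $\v$ (of length $M$). Regarding $\Pii$ as a standard Gaussian vector in $\SB^{M\times K}$, the quantity $X_{\u,\v} \triangleq \v^H \Pii \u$ is a centered Gaussian process on the product of the two unit spheres whose covariance is $\EB X_{\u,\v} X_{\u',\v'} = \langle \u,\u'\rangle \langle \v,\v'\rangle$. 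Comparing it with the much simpler process $Y_{\u,\v} \triangleq \langle \g,\u\rangle + \langle \h,\v\rangle$ built from independent standard Gaussian vectors $\g$ (length $K$) and $\h$ (length $M$), via Gordon's Gaussian min--max inequality (equivalently Chevet's inequality), yields
\begin{equation*}
\EB \big\| \Pii \big\|_2 \: \leq \: \EB \| \g \|_2 \: + \: \EB \| \h \|_2 \: \leq \: \sqrt{K} \: + \: \sqrt{M},
\end{equation*}
where the last step uses $\EB \|\g\|_2 \le \sqrt{\EB \|\g\|_2^2} = \sqrt{K}$ and likewise for $\h$.

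Next I would turn this bound on the mean into a high-probability bound. The map sending the matrix entries to $\big\| \Pii \big\|_2$ is $1$-Lipschitz with respect to the Frobenius (Euclidean) norm, since $\big| \|\A\|_2 - \|\B\|_2 \big| \le \|\A-\B\|_2 \le \|\A-\B\|_F$. The Gaussian concentration inequality for Lipschitz functions then gives, for every $s>0$,
\begin{equation*}
\PB \Big( \big\| \Pii \big\|_2 > \sqrt{M} + \sqrt{K} + s \Big) \: \leq \: \PB \Big( \big\| \Pii \big\|_2 > \EB \big\| \Pii \big\|_2 + s \Big) \: \leq \: e^{-s^2/2}.
\end{equation*}
Choosing $s$ to be a fixed constant (or a slowly growing sequence) and invoking the Borel--Cantelli lemma along the sequence of dimensions delivers $\big\| \Pii \big\|_2 \le \sqrt{M} + \sqrt{K} + \OM(1)$ almost surely, which is exactly the stated conclusion.

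As an elementary fallback that avoids the comparison machinery, one can run an $\epsilon$-net argument: fix $\epsilon = 1/4$, take $\epsilon$-nets of the two unit spheres of cardinalities at most $(3/\epsilon)^K$ and $(3/\epsilon)^M$, and use the standard fact $\big\| \Pii \big\|_2 \le (1-2\epsilon)^{-1} \max_{\u,\v \text{ in the nets}} \v^H \Pii \u$; for each fixed pair, rotational invariance makes $\v^H \Pii \u$ exactly $\NM(0,1)$, so a union bound with the Gaussian tail $\PB(g>t) \le e^{-t^2/2}$ controls the maximum. This route, however, only yields $\big\| \Pii \big\|_2 \le C\sqrt{M} + \OM(\sqrt{K})$ for some $C>1$, not the sharp leading constant $1$.

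The main obstacle is precisely that sharp constant in front of $\sqrt{M}$: the crude net bound loses a multiplicative factor because $\log(3/\epsilon)\cdot M$ competes with $(\sqrt{M})^2/2$ in the union bound, so obtaining the leading term exactly $\sqrt{M}+\sqrt{K}$ forces one into the Gordon/Chevet comparison. Setting up the two Gaussian processes $X$ and $Y$ and verifying the incremental hypotheses required by Gordon's theorem (a naive Slepian comparison fails because $\langle\u,\u'\rangle\langle\v,\v'\rangle \le \langle\u,\u'\rangle + \langle\v,\v'\rangle$ does not hold unconditionally, so one needs the variance-normalized version) is the technical heart of the argument. Once that is in hand, the Lipschitz concentration step and the Borel--Cantelli conclusion are routine, and the hypothesis that $M$ is substantially larger than $K$ only enters to make the error term $\OM(1)$ negligible relative to $\sqrt{M}$.
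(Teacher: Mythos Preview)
Your argument is correct and is precisely the standard route from the cited reference: bound $\EB\|\Pii\|_2$ by $\sqrt{M}+\sqrt{K}$ via a Gaussian comparison (Gordon/Chevet), then invoke Lipschitz concentration of the spectral norm and Borel--Cantelli to obtain the almost-sure statement. Note, however, that the paper does not give its own proof of this lemma at all---it is simply quoted from \cite{vershynin2010introduction} and used as a black box in the proofs of Theorems~\ref{thm:power} and~\ref{thm:power2}---so there is nothing in the paper to compare against beyond the citation itself, and your sketch effectively reconstructs the argument behind that citation.
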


\begin{lemma} [\cite{rudelson2008littlewood,tao2010random}] \label{lem:gauss2}
	Let $\G$ be a $K \times K$ matrix whose entries are i.i.d.\ drawn from $\NM (0, 1)$.
	Then the smallest singualr value of $\G$ satisfies
	\begin{equation*}
	\sigma_{K} ( \G )
	\: \geq \: \tfrac{\delta }{\sqrt{ K} }
	\end{equation*}
	with probability at least $1-\delta - o(1) $.
\end{lemma}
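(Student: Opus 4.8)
The plan is to treat $\sigma_{K}(\G)$ as the least singular value of a square real Gaussian matrix and to control its lower tail $\PB[\sigma_K(\G)\le \delta/\sqrt{K}]$. Writing $\sigma_K(\G)^2=\lambda_{\min}(\G^{H}\G)$, the matrix $\G^{H}\G$ is a real Wishart (Laguerre) ensemble in the square regime $n=p=K$, so the whole statement is a near-zero estimate for the smallest eigenvalue of this ensemble. A first instinct --- Markov applied to $\EB[\sigma_K(\G)^{-2}]=\EB[\tr((\G^{H}\G)^{-1})]$ --- fails outright, since the mean of the inverse Wishart diverges exactly at $n=p$; thus $\sigma_K(\G)^{-1}$ is not square-integrable and no second-moment argument can see the correct scale. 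This already signals that the sharp constant must come from the actual density of $\sigma_{\min}$ near $0$, not from moment bounds.

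The conceptual backbone I would set up first is the geometric invertibility bound. Let $\v$ be a unit right singular vector with $\|\G\v\|_2=\sigma_{\min}:=\sigma_K(\G)$, let $\a_j$ be the columns of $\G$, and let $H_j=\mathrm{span}\{\a_i:i\neq j\}$. Projecting $\G\v=\sum_j v_j\a_j$ shows $|v_j|\,\mathrm{dist}(\a_j,H_j)\le \|\G\v\|_2=\sigma_{\min}$ for every $j$, and since $\|\v\|_2=1$ some coordinate obeys $|v_j|\ge 1/\sqrt{K}$; hence
\[
\{\sigma_{\min}\le \epsilon/\sqrt{K}\}\ \subseteq\ \bigcup_{j=1}^{K}\{\mathrm{dist}(\a_j,H_j)\le \epsilon\}.
\]
In the Gaussian model each column $\a_j$ is independent of $H_j$, so conditioning on the a.s.\ one-dimensional orthocomplement $H_j^{\perp}$ gives $\mathrm{dist}(\a_j,H_j)\stackrel{d}{=}|g|$ with $g\sim\NM(0,1)$, whence $\PB[\mathrm{dist}(\a_j,H_j)\le\epsilon]\le\sqrt{2/\pi}\,\epsilon$. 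This localizes the phenomenon cleanly, but a naive union bound only yields $K\cdot\sqrt{2/\pi}\,\epsilon\sim K\delta$ at $\epsilon=\delta$ --- too weak by a factor $K$ --- because the $K$ distance events are strongly correlated. I would therefore retain this only as motivation and extract the sharp constant from the exact law.

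For the sharp estimate I would pass to the exact distribution via Householder (Dumitriu--Edelman) bidiagonalization: $\sigma_{\min}(\G)$ has the law of the least singular value of a lower-bidiagonal matrix with independent chi entries $\chi_K,\chi_{K-1},\ldots,\chi_1$ on and below the diagonal. Near $0$ the tail is governed by the half-normal bottom entry $\chi_1=|\NM(0,1)|$, whose density is bounded at the origin, which is precisely what produces a lower tail linear in $\delta$ rather than in $K\delta$. Equivalently, Edelman's formula for the real square Laguerre ensemble gives the exact finite-$K$ density of $\sigma_{\min}$ and, in the limit, $\PB[\sqrt{K}\,\sigma_{\min}\le s]\to 1-e^{-s-s^2/2}$. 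Evaluating at $s=\delta$ yields $\PB[\sigma_K(\G)\ge \delta/\sqrt{K}]\to e^{-\delta-\delta^2/2}\ge 1-\delta-o(1)$, where both the $O(\delta^2)$ discrepancy from the exact limit and the finite-$K$ correction are absorbed into the $o(1)$; the non-asymptotic bounds of \cite{rudelson2008littlewood,tao2010random} give the same order uniformly in $K$. The main obstacle throughout is exactly the factor-$K$ gap between the soft geometric/union estimate and the correct constant: closing it is impossible with moment or union-bound tools and genuinely requires the determinantal (Laguerre) structure --- either the exact density or the universality transfer of \cite{tao2010random}.
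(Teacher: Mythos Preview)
The paper does not prove this lemma at all: it is stated purely as a citation to \cite{rudelson2008littlewood,tao2010random} and used as a black box in the proof of Theorem~\ref{thm:power}. There is therefore no ``paper's own proof'' to compare your proposal against.

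That said, your sketch is a faithful outline of how the cited results are actually obtained. The geometric invertibility estimate you give (picking a large coordinate of the minimizing unit vector and bounding $\mathrm{dist}(\a_j,H_j)$) is exactly the Rudelson--Vershynin mechanism, and you correctly identify why the naive union bound loses a factor $K$. For the Gaussian case specifically, Edelman's exact law $\PB[\sqrt{K}\,\sigma_{\min}\le s]\to 1-e^{-s-s^2/2}$ gives the sharp constant; one small refinement of your write-up is that $e^{-\delta-\delta^2/2}\ge 1-\delta$ holds \emph{exactly} for all $\delta\ge 0$ (the difference is convex with a double zero at the origin), so the only slack that needs to be absorbed into the $o(1)$ is the finite-$K$ correction to the limit law, not an additional $O(\delta^2)$ term. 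The references \cite{rudelson2008littlewood,tao2010random} extend the same order of bound to non-Gaussian entries via decoupling/anti-concentration and universality, respectively, but for the present lemma the Edelman route you describe already suffices.
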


\subsection{Proof of Theorem \ref{thm:nys}}

\begin{lemma} \label{lem:nys1}
	Let the notation be defined in Section~\ref{sec:theory:nys}.
	Let $\Pii$ be any matrix with $\rk (\U_K^H \Pii) \geq K$.
	Then, for all $\a$,
	\begin{align*}
	& \big\| \U_K \U_K^H \a - \tilde{\U} \tilde{\U}^H  \U_K \U_K^H \a  \big\|_2   \\
	& \leq \: \Big\| (\S - \S_K) \Pii \big( \U_K^H \Pii \big)^\dag \Si_K^{-1} \U_K^H \a   \Big\|_2 ,
	\end{align*}
\end{lemma}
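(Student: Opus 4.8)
\subsection*{Proof proposal for Lemma~\ref{lem:nys1}}

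The plan is to read the left-hand side as a \emph{distance to a subspace} and then exhibit a convenient element of that subspace. Since $\tilde{\U}$ is an orthonormal basis of $\range(\tilde{\S})$, the matrix $\tilde{\U}\tilde{\U}^H$ is the orthogonal projector onto $\range(\tilde{\S})$, so for every vector $\x$ one has $\| \x - \tilde{\U}\tilde{\U}^H\x \|_2 = \min_{\z \in \range(\tilde{\S})} \| \x - \z \|_2$. The first step is therefore to identify $\range(\tilde{\S})$. Writing the covariance as $\S = \Z\Z^H$ (possible since $\S$ is positive semidefinite) and setting $\M = \Z^H\Pii$, one checks that $\tilde{\S} = \C\W\C^H = \Z\M(\M^H\M)^\dagger\M^H\Z^H$, whose range coincides with $\range(\Z\M) = \range(\S\Pii) = \range(\C)$. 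Consequently $\big\| \U_K\U_K^H\a - \tilde{\U}\tilde{\U}^H\U_K\U_K^H\a \big\|_2 = \min_{\w} \big\| \U_K\U_K^H\a - \S\Pii\w \big\|_2$, and it suffices to bound this minimum by the value attained at one specific $\w$.

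The second step is to guess the right $\w$. Split $\S = \S_K + (\S - \S_K)$, so that $\S\Pii\w = \U_K\Si_K\U_K^H\Pii\w + (\S - \S_K)\Pii\w$; we want the first summand to reproduce $\U_K\U_K^H\a$ exactly. This forces $\Si_K\U_K^H\Pii\w = \U_K^H\a$, i.e.\ $\U_K^H\Pii\w = \Si_K^{-1}\U_K^H\a$ (note $\Si_K$ is invertible because $\sigma_K(\S) > 0$). The hypothesis $\rk(\U_K^H\Pii) \geq K$ means the $K\times p$ matrix $\U_K^H\Pii$ has full row rank, hence $(\U_K^H\Pii)(\U_K^H\Pii)^\dagger = \I_K$, and the choice $\w_0 = (\U_K^H\Pii)^\dagger\Si_K^{-1}\U_K^H\a$ solves the equation. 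A direct computation then gives $\S_K\Pii\w_0 = \U_K\Si_K\big(\U_K^H\Pii\big)\big(\U_K^H\Pii\big)^\dagger\Si_K^{-1}\U_K^H\a = \U_K\U_K^H\a$, so that $\S\Pii\w_0 = \U_K\U_K^H\a + (\S - \S_K)\Pii\w_0$.

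Plugging $\w_0$ into the bound from the first step yields
\[
\big\| \U_K\U_K^H\a - \tilde{\U}\tilde{\U}^H\U_K\U_K^H\a \big\|_2 \: \leq \: \big\| \U_K\U_K^H\a - \S\Pii\w_0 \big\|_2 \: = \: \big\| (\S - \S_K)\Pii(\U_K^H\Pii)^\dagger\Si_K^{-1}\U_K^H\a \big\|_2 ,
\]
which is exactly the claimed inequality (for each fixed $\a$, with $\w_0$ depending on $\a$). I expect the only delicate point to be the first step: a priori the weighting with $\W = (\Pii^H\S\Pii)^\dagger$ could shrink the range of the sketch, so the identity $\range(\tilde{\S}) = \range(\S\Pii)$ needs the factorization $\S = \Z\Z^H$ to be made clean. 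Everything after that is the one-line choice of $\w_0$ together with routine pseudo-inverse algebra.
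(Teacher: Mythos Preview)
Your proposal is correct and follows essentially the same route as the paper: both recognize the left-hand side as the distance from $\U_K\U_K^H\a$ to $\range(\tilde{\U})$, identify that range with $\range(\S\Pii)$, and then plug in the same test vector $\w_0 = (\U_K^H\Pii)^\dagger\Si_K^{-1}\U_K^H\a$ (the paper writes $\tilde{\x} = \tilde{\U}^H\S\Pii\w_0$, but $\tilde{\U}\tilde{\x} = \S\Pii\w_0$ is exactly your candidate). The only substantive difference is in the ``delicate point'' you flag: the paper establishes $\range(\C)\subseteq\range(\tilde{\S})$ by citing an external result (\cite[Lemma~12]{wang2016spsd}) that gives $(\C\W\C^H)(\C\W\C^H)^\dagger\C=\C$, whereas you give a self-contained argument via the factorization $\S=\Z\Z^H$, which reduces $\tilde{\S}$ to $\Z\,P_{\range(\M)}\,\Z^H$ with $\M=\Z^H\Pii$. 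Your version has the small advantage of being self-contained and of making the equality $\range(\tilde{\S})=\range(\C)$ (not just the needed inclusion) transparent; the paper's citation is shorter but opaque. Either way the remainder of the argument is identical.
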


\begin{proof}
	Let $\tilde{\U}$ be the orthonormal bases of $\tilde{\S}$.
	Lemma 12 of \cite{wang2016spsd} shows that
	\begin{equation*}
	\big( \C \W^\dag \C^H \big) \big( \C \W^\dag \C^H \big)^\dag \C
	\: = \: \C .
	\end{equation*}
	Thus $\C = \S \Pii$ is in the subspace spanned by the columns of $\tilde{\U}$, and therefore,
	\begin{equation} \label{eq:lem:nys1:1}
	\tilde{\U} \tilde{\U}^H \S \Pii
	\: = \: \S \Pii .
	\end{equation}
	It is not hard to prove that
	\begin{equation*}
	\tilde{\U}^H  \U_K \U_K^H \a
	\: = \: \argmin_{\x} \big\| \U_K \U_K^H \a - \tilde{\U} \x \big\|_2.
	\end{equation*}
	Thus, the inequality holds for all $\tilde{\x}$:
	\begin{align}\label{eq:lem:nys1:2}
	& \big\| \U_K \U_K^H \a - \tilde{\U} \tilde{\U}^H  \U_K \U_K^H \a  \big\|_2 \nonumber \\
	& = \: \min_{\x} \big\| \U_K \U_K^H \a - \tilde{\U} \x \big\|_2
	\: \leq \: \big\| \U_K \U_K^H \a - \tilde{\U} \tilde{\x} \big\|_2 .
	\end{align}
	We artifically construct
	\begin{equation*}
	\tilde{\x}
	\: = \: \tilde{\U}^H \S \Pii \big( \U_K^H \Pii \big)^\dag \Si_K^{-1} \U_K^H \a .
	\end{equation*}
	It follows from \eqref{eq:lem:nys1:1} that
	\begin{align} \label{eq:lem:nys1:3}
	& \tilde{\U} \tilde{\x}
	\: = \: \tilde{\U} \tilde{\U}^H \S \Pii \big( \U_K^H \Pii \big)^\dag \Si_K^{-1} \U_K^H \a \nonumber  \\
	& = \: \S \Pii \big( \U_K^H \Pii \big)^\dag \Si_K^{-1} \U_K^H \a  \nonumber \\
	& = \: \big( \S_K + \S - \S_K \big) \Pii \big( \U_K^H \Pii \big)^\dag \Si_K^{-1} \U_K^H \a .
	\end{align}
	Since $\rk (\U_K^H \Pii) \geq K$, we have $(\U_K^H \Pii) ( \U_K^H \Pii )^\dag = \I_K$.
	Thus
	\begin{align}  \label{eq:lem:nys1:4}
	&\S_K \Pii \big( \U_K^H \Pii \big)^\dag \Si_K^{-1} \U_K \a \nonumber \\
	& = \: \U_K \Si_K \U_K^H \Pii \big( \U_K^H \Pii \big)^\dag \Si_K^{-1} \U_K^H \a \nonumber \\
	& = \: \U_K \Si_K \Si_K^{-1} \U_K^H \a
	\: = \: \U_K  \U_K^H \a .
	\end{align}
	It follows from \eqref{eq:lem:nys1:3} and \eqref{eq:lem:nys1:4} that
	\begin{align} \label{eq:lem:nys1:5}
	& \tilde{\U} \tilde{\x}
	\: = \: \big[ \S_K + (\S - \S_K) \big] \Pii \big( \U_K^H \Pii \big)^\dag \Si_K^{-1} \U_K^H \a \nonumber \\
	& = \: \U_K  \U_K^H \a  + (\S - \S_K) \Pii \big( \U_K^H \Pii \big)^\dag \Si_K^{-1} \U_K^H \a  .
	\end{align}
	It follows from \eqref{eq:lem:nys1:2} and \eqref{eq:lem:nys1:5} that
	\begin{align*}
	& \big\| \U_K \U_K^H \a - \tilde{\U} \tilde{\U}^H  \U_K \U_K^H \a  \big\|_2
	\: \leq \: \big\| \U_K \U_K^H \a - \tilde{\U} \tilde{\x} \big\|_2 \\
	& = \: \Big\| (\S - \S_K) \Pii \big( \U_K^H \Pii \big)^\dag \Si_K^{-1} \U_K^H \a   \Big\|_2 ,
	\end{align*}
	by which the lemma follows.
\end{proof}

\noindent
{\bf Complete the proof of Theorem~\ref{thm:nys}:}
\begin{proof}
  	The $M\times p$ matrix $\Pii$ is a uniform sampling matrix.
	Then $\C = \S \Pii$ and $\W = \Pii^T \S \Pii$.
	Lemma~\ref{lem:uniform1} shows the spectral norm of $\Pii$ is bounded by
	\begin{equation*}
	\big\| \Pii \big\|_2
	\: \leq \: \sqrt{M/p} .
	\end{equation*}
	Because $\U_K$ has orthonormal columns, Lemma~\ref{lem:uniform2} shows that for $
	p \: \geq \: \frac{(6+2\eta)\mu (\U_K) K  }{3\eta^2} \log \frac{K}{\delta},$
	the $K$-th singular value of $\U_K^H \Pii$ is bounded by
	\begin{equation*}
	\sigma_{K} ( \U_K^H \Pii )
	\: \geq \: \sqrt{1 - \eta }
	\end{equation*}
	with probability at least $1-\delta$.
	Let follows from Lemma~\ref{lem:nys1} that
	\begin{align*}
	& \big\| \big( \I_M - \tilde{\U} \tilde{\U}^H  \big) \U_K \U_K^H \a  \big\|_2  \\
	& \leq \: \big\| \U_K \U_K^H \a - \tilde{\U} \tilde{\U}^H  \U_K \U_K^H \a  \big\|_2   \\
	& \leq \: \Big\| (\S - \S_K) \Pii \big( \U_K^H \Pii \big)^\dag \Si_K^{-1} \U_K^H \a   \Big\|_2 \\
	& \leq \: \big\| (\S - \S_K) \big\|_2 \, \big\| \Pii \big\|_2 \, \big\| \big( \U_K^H \Pii \big)^\dag \big\|_2 \, \big\| \Si_K^{-1} \big\|_2\, \big\| \U_K^H \a   \big\|_2 \\
	& \leq \: \frac{ \sigma_{K+1} (\S ) }{ \sigma_{K} (\S ) } \,
	\frac{\sqrt{{M}/{p} } }{ \sqrt{1- \eta} } \, \big\| \U_K^H \a   \big\|_2  .
	\end{align*}
	We set $\eta = 0.75$.
	Then, for
	\begin{equation*}
	p \: \geq \: 4.5 \mu (\U_K) \, K \cdot  \log \tfrac{K}{\delta},
	\end{equation*}
	it holds with probability at least $1-\delta$ that
	\begin{align*}
	& \big\| \big( \I_M - \tilde{\U} \tilde{\U}^H  \big) \U_K \U_K^H \a  \big\|_2
	\: \leq \: 2 \sqrt{\tfrac{ M}{p}}  \, \tfrac{ \sigma_{K+1} (\S ) }{ \sigma_{K} (\S ) } \, \big\| \U_K^H \a \big\|_2  .
	\end{align*}
	It follows that
	\begin{align*}
	& \big\| \big(\I_M - \tilde{\U} \tilde{\U}^H \big) \a \big\|_2 \\
	& = \: \big\| \big(\I_M - \tilde{\U} \tilde{\U}^H \big) \big( \U_K \U_K^H + \U_{-K} \U_{-K}^H \big) \a \big\|_2 \\
	& \leq \: \big\| \big(\I_M - \tilde{\U} \tilde{\U}^H \big) \U_K \U_K^H  \a \big\|_2
	+ \big\| \U_{-K} \U_{-K}^H  \a \big\|_2 .
	\end{align*}
	Hence,
	\begin{align*}
	& \big\| \big(\I_M - \tilde{\U} \tilde{\U}^H \big) \a \big\|_2
	- \big\| \big( \I_M - \U_{K} \U_{K}^H \big) \a \big\|_2   \\
	& \leq \: \big\| \big(\I_M - \tilde{\U} \tilde{\U}^H \big) \U_K \U_K^H  \a \big\|_2   \leq \: 2 \sqrt{\tfrac{ M}{p}}  \, \tfrac{ \sigma_{K+1} (\S ) }{ \sigma_{K} (\S ) } \, \big\| \U_K^H \a \big\|_2  ,
	\end{align*}
	where the last inequality holds with probability at least $1-\delta$.

	We leave out $\theta$ in $P_{\textrm{music}} (\theta)$, $\tilde{P}_{\textrm{music}} (\theta)$, and $\a (\theta)$.
	It follows from \eqref{eq:p_music} and \eqref{eq:theory:music_nys} that
	\begin{align*}
	\sqrt{\tfrac{P_{\textrm{music}}}{\tilde{P}_{\textrm{music}}}}
	\:& = \: \tfrac{ \big\| ( \I - \tilde{\U} \tilde{\U}^H ) \a \big\|_2   }{ \big\| ( \I - \U_K \U_K^H ) \a \big\|_2 } \\
	& \leq \: 1 + 2 \sqrt{\tfrac{ M}{p}}  \, \tfrac{ \sigma_{K+1} (\S ) }{ \sigma_{K} (\S ) } \, \tfrac{ \big\|  \U_K^H \a \big\|_2   }{ \big\| ( \I - \U_K \U_K^H ) \a \big\|_2 } .
	\end{align*}

   Furthermore, when the number of element is sufficiently large, for an variant pseudo-spectrum spectrum $||\U_K^H \a||_2$ and $\big\| ( \I - \U_K \U_K^H ) \a \big\|_2$, the following relations always hold:
    \begin{equation} \label{eq:inequality_spetrum_1}
	||\U_K^H \a ||_2^2 \leq ||\U_K^H \a(\theta_k) ||_2^2 \rightarrow M,
    \end{equation}
    and
    \begin{equation} \label{eq:inequality_spetrum_2}
	\big\| ( \I - \U_K \U_K^H ) \a \big\|_2 \geq \big\| ( \I - \U_K \U_K^H ) \a(\theta_k) \big\|_2 \rightarrow 1,
    \end{equation}
    where $\theta_k$ denotes the $k$-th AoA of target point.
    As a result, we further have:
    \begin{equation} \label{eq:inequality_spetrum_ratio}
	\tfrac{ \big\|  \U_K^H \a \big\|_2    }{ \big\| ( \I - \U_K \U_K^H ) \a \big\|_2  } \leq  \tfrac{||\U_K^H \a ||_2} { || \a ||_2} \leq \sqrt{ M}.
    \end{equation}

  On this basis, we can finally obtain the Theorem 1:
  \begin{align*}
	\sqrt{\tfrac{P_{\textrm{music}}}{\tilde{P}_{\textrm{music}}}}
	  \leq \: 1 + 2 \sqrt{\tfrac{ M^2}{p}}  \tfrac{ \sigma_{K+1} (\S ) }{ \sigma_{K} (\S ) } .
	\end{align*}
\end{proof}

\subsection{Proof of Theorem~\ref{thm:power}}

\begin{lemma} \label{lem:power1}
	Let the notation be defined in Section~\ref{sec:theory:power}.
	Let $\Pii$ be any matrix satisfying that $\rk (\U_K^H \Pii) \geq K$.
	Then, for all $\a$,
	\begin{align*}
	& \big\| \U_K \U_K^H \a - \tilde{\U} \tilde{\U}^H  \U_K \U_K^H \a  \big\|_2   \\
	& \leq \: \Big\| (\S - \S_K)^{t+1} \Pii \big( \U_K^H \Pii \big)^\dag \Si_K^{-t-1} \U_K^H \a   \Big\|_2 .
	\end{align*}
\end{lemma}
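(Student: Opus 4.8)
The plan is to follow the proof of Lemma~\ref{lem:nys1} almost line by line, replacing the single covariance $\S$ by its power $\S^{t+1}$ wherever the column space of the sketch enters. The only structural change is the observation that, after $t$ power iterations, the sketch $\C = \S\V^{(t)}$ spans exactly $\range(\S^{t+1}\Pii)$: since $\V^{(t)}$ is (by definition) an orthonormal basis of $\S^t\Pii$ we have $\range(\V^{(t)}) = \range(\S^t\Pii)$, and left-multiplication by $\S$ carries this subspace to $\range(\S\V^{(t)}) = \S\big(\range(\S^t\Pii)\big) = \range(\S^{t+1}\Pii)$.

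First I would invoke Lemma~12 of \cite{wang2016spsd}, exactly as in the proof of Lemma~\ref{lem:nys1}, to get $\tilde{\U}\tilde{\U}^H\C = \C$, hence by the span identity above $\tilde{\U}\tilde{\U}^H\S^{t+1}\Pii = \S^{t+1}\Pii$. Next, because $\tilde{\U}^H\U_K\U_K^H\a$ is the minimizer of $\|\U_K\U_K^H\a - \tilde{\U}\x\|_2$ over $\x$, for any artificial $\tilde{\x}$ we have
\[
\big\|\U_K\U_K^H\a - \tilde{\U}\tilde{\U}^H\U_K\U_K^H\a\big\|_2 \;\leq\; \big\|\U_K\U_K^H\a - \tilde{\U}\tilde{\x}\big\|_2 .
\]
I would then choose $\tilde{\x} = \tilde{\U}^H\S^{t+1}\Pii\big(\U_K^H\Pii\big)^\dag\Si_K^{-t-1}\U_K^H\a$, so that the identity $\tilde{\U}\tilde{\U}^H\S^{t+1}\Pii = \S^{t+1}\Pii$ gives $\tilde{\U}\tilde{\x} = \S^{t+1}\Pii\big(\U_K^H\Pii\big)^\dag\Si_K^{-t-1}\U_K^H\a$.

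The crucial algebraic step is the splitting $\S^{t+1} = \S_K^{t+1} + (\S - \S_K)^{t+1}$. This holds because $\S_K = \U_K\Si_K\U_K^H$ and $\S - \S_K = \sum_{i=K+1}^{M}\sigma_i(\S)\,\u_i\u_i^H$ have orthogonal ranges, so $\S_K(\S - \S_K) = (\S - \S_K)\S_K = \0$, and every mixed monomial in the expansion of $\big(\S_K + (\S - \S_K)\big)^{t+1}$ vanishes. Using $\rk(\U_K^H\Pii) \geq K$, i.e.\ $\big(\U_K^H\Pii\big)\big(\U_K^H\Pii\big)^\dag = \I_K$, the $\S_K^{t+1}$ term collapses:
\[
\S_K^{t+1}\Pii\big(\U_K^H\Pii\big)^\dag\Si_K^{-t-1}\U_K^H\a = \U_K\Si_K^{t+1}\Si_K^{-t-1}\U_K^H\a = \U_K\U_K^H\a ,
\]
so that $\tilde{\U}\tilde{\x} = \U_K\U_K^H\a + (\S - \S_K)^{t+1}\Pii\big(\U_K^H\Pii\big)^\dag\Si_K^{-t-1}\U_K^H\a$. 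Substituting this into the displayed inequality gives precisely the claimed bound.

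The main obstacle — and essentially the only point that is not a verbatim transcription of the proof of Lemma~\ref{lem:nys1} — is pinning down the span identity $\range(\C) = \range(\S^{t+1}\Pii)$, i.e.\ checking that the repeated orthonormalization $\V \leftarrow \orth(\C)$, $\C \leftarrow \S\V$ in Algorithm~\ref{alg:power} neither loses rank nor leaves the subspace $\range(\S^t\Pii)$; here I would simply use that $\V^{(t)}$ is defined to be an orthonormal basis of $\S^t\Pii$ (so it has the same range) together with the elementary set identity for the image of a subspace under $\S$. Once that is in place, carrying the power $t+1$ through the pseudoinverse and least-squares manipulations is routine, with $\Si_K$ invertible because $\sigma_K(\S) > 0$.
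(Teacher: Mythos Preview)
Your proposal is correct and mirrors the paper's own proof essentially step for step: the same invocation of \cite[Lemma~12]{wang2016spsd} to get $\tilde{\U}\tilde{\U}^H\C=\C$, the same span identity $\range(\C)=\range(\S^{t+1}\Pii)$, the same artificial choice of $\tilde{\x}$, and the same collapse of the $\S_K^{t+1}$ term via $(\U_K^H\Pii)(\U_K^H\Pii)^\dag=\I_K$. If anything you are slightly more explicit than the paper, since you spell out why $\S^{t+1}=\S_K^{t+1}+(\S-\S_K)^{t+1}$ via the orthogonality $\S_K(\S-\S_K)=\0$, a point the paper leaves implicit.
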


\begin{proof}
Since $\tilde{\S} = \C \W^\dag \C^H$, it follows from \cite[Lemma 12]{wang2016spsd} that
\begin{equation*}
\tilde{\S} \tilde{\S}^\dag \C
\: = \: \C .
\end{equation*}
Since $\tilde{\U}$ is an orthonormal bases of $\tilde{\S}$,
\begin{equation*}
\tilde{\U} \tilde{\U}^H \C
\: = \tilde{\S} \tilde{\S}^\dag \C
\: = \: \C .
\end{equation*}
By the definition $\C = \S \V$ and that $\V$ is the orthonormal basis of $\S^t \Pii$,
we have that $\C$ and $\S^{t+1} \Pii$ have the same column space.
\begin{equation}\label{eq:lem:power:1}
\tilde{\U} \tilde{\U}^H \S^{t+1} \Pii
\: = \: \S^{t+1} \Pii .
\end{equation}
In the same way as the proof of Lemma~\ref{lem:nys1},
we can prove that for all $\tilde{\x}$:
\begin{align}\label{eq:lem:power:2}
& \big\| \U_K \U_K^H \a - \tilde{\U} \tilde{\U}^H  \U_K \U_K^H \a  \big\|_2
\leq  \big\| \U_K \U_K^H \a - \tilde{\U} \tilde{\x} \big\|_2 .
\end{align}
We artificially construct
\begin{equation*}
\tilde{\x}
\: = \: \tilde{\U}^H \S^{t+1} \Pii \big( \U_K^H \Pii \big)^\dag \Si_K^{-t-1} \U_K^H \a .
\end{equation*}
It follows from \eqref{eq:lem:power:1} that
\begin{align}\label{eq:lem:power:3}
& \tilde{\U} \tilde{\x}
\: = \: \tilde{\U}  \tilde{\U}^H \S^{t+1} \Pii \big( \U_K^H \Pii \big)^\dag \Si_K^{-t-1} \U_K^H \a \nonumber \\
& = \: \big( \S^{t+1}_K + \S^{t+1} - \S^{t+1}_K \big) \Pii \big( \U_K^H \Pii \big)^\dag \Si_K^{-t-1} \U_K^H \a .
\end{align}
In the same way as the proof of Lemma~\ref{lem:nys1}, we can use $\rk (\U_K^H \Pii) \geq K$ to show that
\begin{align*}
&  \S^{t+1}_K  \Pii \big( \U_K^H \Pii \big)^\dag \Si_K^{-t-1} \U_K^H \a \\
& = \: \U_K \Si^{t+1}_K \U_K^H \Pii \big( \U_K^H \Pii \big)^\dag \Si_K^{-t-1} \U_K^H \a  \\
& = \: \U_K \Si^{t+1}_K  \Si_K^{-t-1} \U_K^H \a
\: = \: \U_K \U_K^H \a .
\end{align*}
It follows from \eqref{eq:lem:power:3} that
\begin{align*}
\tilde{\U} \tilde{\x}
\: = \: \U_K \U_K^H \a
+ \:
\big( \S^{t+1} - \S^{t+1}_K \big) \Pii \big( \U_K^H \Pii \big)^\dag \Si_K^{-t-1} \U_K^H \a . \nonumber
\end{align*}
It follows from \eqref{eq:lem:power:2} that
\begin{align*}
& \big\| \U_K \U_K^H \a - \tilde{\U} \tilde{\U}^H  \U_K \U_K^H \a  \big\|_2   \\
& \leq \: \Big\| \big( \S^{t+1} - \S^{t+1}_K \big) \Pii \big( \U_K^H \Pii \big)^\dag \Si_K^{-t-1} \U_K^H \a \Big\|_2  ,
\end{align*}
by which the lemma follows.
\end{proof}

\noindent
{\bf Complete the proof of Theorem~\ref{thm:power}:}
\begin{proof}
	Lemma~\ref{lem:gauss1} shows that the spectral norm of the $M\times K$ standard Gaussian matrix $\Pii$ satisfies
	\begin{equation*}
	\big\| \Pii \big\|_2
	\: \leq \: \sqrt{M } + \sqrt{K} + \OM (1)  , \qquad \textrm{almost surely}.
	\end{equation*}
	Since $\U_K$ has orthonormal columns, the $K\times K$ matrix $\U_K^H \Pii$ is a standard Gaussian matrix.
	It follows from Lemma~\ref{lem:gauss2} that
	\begin{equation*}
	\sigma_{K}^{-1} ( \U_K^H \Pii )
	\: \leq \: \tfrac{\sqrt{ K} } {\delta }
	\end{equation*}
	with probability at least $1-\delta - o(1) $.
	It follows from Lemma~\ref{lem:power1} that
	\begin{align*}
	& \big\| \U_K \U_K^H \a - \tilde{\U} \tilde{\U}^H  \U_K \U_K^H \a  \big\|_2   \\
	& \leq \: \Big\| \big( \S^{t+1} - \S^{t+1}_K \big) \Pii \big( \U_K^H \Pii \big)^\dag \Si_K^{-t-1} \U_K^H \a \Big\|_2 \\
	& \leq \: \big( \tfrac{ \sigma_{K+1} (\S ) }{ \sigma_{K} (\S ) } \big)^{t+1} \,
	\big\| \Pii \big\|_2 \, \big\| ( \U_K^H \Pii )^\dag \big\|_2 \,
	\big\| \U_K^H \a \big\|_2 \\
	& \leq \: \big( \tfrac{ \sigma_{K+1} (\S ) }{ \sigma_{K} (\S ) } \big)^{t+1} \,
	\frac{\sqrt{MK} + K + \OM ( \sqrt{K} )}{\delta}
	\big\| \U_K^H \a \big\|_2 .
	\end{align*}
	The proof follows from the proof of Theorem~\ref{thm:nys} that
	\begin{align*}
	\sqrt{\tfrac{P_{\textrm{music}}}{\tilde{P}_{\textrm{music}}}}
	\:  = \: \tfrac{ \big\| ( \I - \tilde{\U} \tilde{\U}^H ) \a \big\|_2   }{ \big\| ( \I - \U_K \U_K^H ) \a \big\|_2 } \leq \: 1 +  \tfrac{ \big\| \big(\I_M - \tilde{\U} \tilde{\U}^H \big) \U_K \U_K^H  \a \big\|_2  }{ \big\| ( \I - \U_K \U_K^H ) \a \big\|_2 } .
	\end{align*}
	Thus, it holds with probability at least $1-\delta - o(1) $ that
	\begin{align*}
	& \sqrt{\tfrac{P_{\textrm{music}}}{\tilde{P}_{\textrm{music}}}}
	\: \leq \: 1 +
	\big( \tfrac{ \sigma_{K+1} (\S ) }{ \sigma_{K} (\S ) } \big)^{t+1} \,
	\tfrac{ \sqrt{MK} \big( 1+o(1) \big)}{\delta }
	\tfrac{ \big\| \U_K^H \a \big\|_2  }{ \big\| ( \I - \U_K \U_K^H ) \a \big\|_2 } ,
	\end{align*}
	by incorporating eq. \eqref{eq:inequality_spetrum_ratio}, then the theorem can be proved, i.e.
    \begin{align*}
	\sqrt{\tfrac{P_{\textrm{music}}}{\tilde{P}_{\textrm{music}}}}
	\: & \leq \: 1 +
	\big( \tfrac{ \sigma_{K+1} (\S ) }{ \sigma_{K} (\S ) } \big)^{t+1} \,
	\tfrac{ \sqrt{MK} \big( 1+o(1) \big)}{\delta }
	 \sqrt{M}, \\
    & =  1 +
		\tfrac{ \sqrt{M^2K}}{\delta } \big( \tfrac{ \sigma_{K+1} (\S ) }{ \sigma_{K} (\S ) } \big)^{t+1} .
	\end{align*}
\end{proof}

\noindent
{\bf Proof of Theorem~\ref{thm:power2}:}
\begin{proof}
	It can be shown that for all $\tilde{\X}$:
	\begin{align*}
	& \big\| \U_K \U_K^H- \tilde{\U} \tilde{\U}^H  \U_K \U_K^H  \big\|_2
	\leq  \big\| \U_K \U_K^H  - \tilde{\U} \tilde{\X} \big\|_2 .
	\end{align*}
	We artifically construct
	\begin{equation*}
	\tilde{\X}
	\: = \: \tilde{\U}^H \S^{t+1} \Pii \big( \U_K^H \Pii \big)^\dag \Si_K^{-t-1} \U_K^H .
	\end{equation*}
	In the same way as the proof of Lemma~\ref{lem:power1}, we can show that
	\begin{align*}
	\tilde{\U} \tilde{\X}
	\: = \: \U_K \U_K^H
	+ \:
	\big( \S^{t+1} - \S^{t+1}_K \big) \Pii \big( \U_K^H \Pii \big)^\dag \Si_K^{-t-1} \U_K^H  . \nonumber
	\end{align*}
	Thus
	\begin{align*}
	& \big\| \U_K \U_K^H- \tilde{\U} \tilde{\U}^H  \U_K \U_K^H  \big\|_2  \\
	& \leq \:   \big\| \big( \S^{t+1} - \S^{t+1}_K \big) \Pii \big( \U_K^H \Pii \big)^\dag \Si_K^{-t-1} \U_K^H  \big\|_2 \\
	& \leq \:   \big( \tfrac{ \sigma_{K+1} (\S ) }{ \sigma_{K} (\S ) } \big)^{t+1} \,
	\big\| \Pii \big\|_2
	\, \big\| ( \U_K^H \Pii )^\dag \big\|_2 .
	\end{align*}
	It follows from the proof of Theorem~\ref{thm:power} that
	\begin{align*}
	& \big\| \U_K \U_K^H - \tilde{\U} \tilde{\U}^H  \U_K \U_K^H   \big\|_2   \\
	& \leq \: \big[ 1 + o (1) \big] \, \tfrac{ \sqrt{M K} }{\delta} \,
	\big( \tfrac{ \sigma_{K+1} (\S ) }{ \sigma_{K} (\S ) } \big)^{t+1}
	\end{align*}
	holds with probability at least $1-\delta$.
	Since $\U_K$ and $\tilde{\U}$ are both $M\times K$, \cite[Eqn 2.54]{arbenz2012lecture} shows that
	\begin{equation*}
	\big\| ( \I_M - \tilde{\U} \tilde{\U}^H  )\U_K \big\|_2
	\: = \: \big\| ( \I_M - \U_K \U_K^H   ) \tilde{\U}  \big\|_2 .
	\end{equation*}
	It follows that with probability at least $1-\delta$,
	\begin{align} \label{eq:thm:power2:2}
	&\big\| ( \I_M - \U_K \U_K^H   ) \tilde{\U}  \big\|_2
	\: = \:\big\| ( \I_M - \tilde{\U} \tilde{\U}^H  ) \U_K \big\|_2 \nonumber \\
	& \leq  \: \big[ 1 + o (1) \big]  \, \tfrac{ \sqrt{M K} }{\delta} \,
	\big( \tfrac{ \sigma_{K+1} (\S ) }{ \sigma_{K} (\S ) } \big)^{t+1} .
	\end{align}
	We have that
	\begin{align}\label{eq:thm:power2:3}
	& \big\| ( \I_M - \U_K \U_K^H   ) \a \big\|_2 \nonumber \\
	& \leq \: \big\| ( \I_M - \U_K \U_K^H   ) \tilde{\U} \tilde{\U}^H  \a \big\|_2 \nonumber \\
	& \quad + \big\| ( \I_M - \U_K \U_K^H   )  ( \I_M - \tilde{\U} \tilde{\U}^H  ) \a \big\|_2 \nonumber \\
	& \leq \: \big\| ( \I_M - \U_K \U_K^H   ) \tilde{\U} \tilde{\U}^H  \a \big\|_2
	+  \big\|   ( \I_M - \tilde{\U} \tilde{\U}^H  ) \a \big\|_2 .
	\end{align}
	It follows from \eqref{eq:thm:power2:2} and \eqref{eq:thm:power2:3} that
	\begin{align*}
	& \big\| ( \I_M - \U_K \U_K^H   ) \a \big\|_2
	-  \big\|   ( \I_M - \tilde{\U} \tilde{\U}^H  ) \a \big\|_2  \\
	& \leq \:  \big\| ( \I_M - \U_K \U_K^H   ) \tilde{\U} \big\|_2 \, \big\| \a \big\|_2 \\
	& \leq \: \big[ 1 + o (1) \big] \, \tfrac{ \sqrt{M K} }{\delta} \,
	\big( \tfrac{ \sigma_{K+1} (\S ) }{ \sigma_{K} (\S ) } \big)^{t+1}  \big\| \a \big\|_2 ,
	\end{align*}
	where the latter inequality holds with probability at least $1-\delta$.
	Equivalently,
	\begin{align*}
	& \big\|   ( \I_M - \tilde{\U} \tilde{\U}^H  ) \a \big\|_2
	- \big\| ( \I_M - \U_K \U_K^H   ) \a \big\|_2 \\
	& \geq \:   - \big[ 1 + o (1) \big] \,  \tfrac{ \sqrt{M K} }{\delta} \,
	\big( \tfrac{ \sigma_{K+1} (\S ) }{ \sigma_{K} (\S ) } \big)^{t+1}  \big\| \a \big\|_2 ,
	\end{align*}
	Thus, with probability at least $1-\delta$,
	\begin{align*}
	& \sqrt{\tfrac{P_{\textrm{music}}}{\tilde{P}_{\textrm{music}}}}
	\: = \: \tfrac{ \big\| ( \I - \tilde{\U} \tilde{\U}^H ) \a \big\|_2   }{ \big\| ( \I - \U_K \U_K^H ) \a \big\|_2 } \\
	& \geq \: 1 - \big[ 1 + o (1) \big] \, \tfrac{ \sqrt{M K} }{\delta} \,
	\big( \tfrac{ \sigma_{K+1} (\S ) }{ \sigma_{K} (\S ) } \big)^{t+1} \,
	\tfrac{ \| \a \|_2   }{ \| ( \I - \U_K \U_K^H ) \a \|_2 }.
	\end{align*}
    By further applying the inequality in eq. \eqref{eq:inequality_spetrum_2}, we have
    \begin{align*}	
	\tfrac{ \| \a \|_2   }{ \| ( \I - \U_K \U_K^H ) \a \|_2  } \leq \sqrt{M} .
	\end{align*}
	by which the theorem can be proved, i.e.
    \begin{align*}
	\sqrt{\tfrac{P_{\textrm{music}}}{\tilde{P}_{\textrm{music}}}}
	 \: = \: \tfrac{ \big\| ( \I - \tilde{\U} \tilde{\U}^H ) \a \big\|_2   }{ \big\| ( \I - \U_K \U_K^H ) \a \big\|_2 }  \geq \: 1 -  \tfrac{ \sqrt{M^2 K} }{\delta} \,
	\big( \tfrac{ \sigma_{K+1} (\S ) }{ \sigma_{K} (\S ) } \big)^{t+1} .
	\end{align*}
\end{proof}

\bibliography{bib/scibib,bib/scibib2}

\bibliographystyle{ieeetr}

\end{document}